\definecolor{my-red}{HTML}{C62828}
\definecolor{my-red-light}{HTML}{E57373}
\definecolor{my-red-verylight}{HTML}{FFCDD2}
\definecolor{my-red-dark}{HTML}{B71C1C}
\definecolor{my-pink}{HTML}{EC407A}
\definecolor{my-pink-light}{HTML}{F48FB1}
\definecolor{my-pink-verylight}{HTML}{F8BBD0}
\definecolor{my-pink-dark}{HTML}{880E4F}
\definecolor{my-purple}{HTML}{8E24AA}
\definecolor{my-purple-light}{HTML}{BA68C8}
\definecolor{my-purple-verylight}{HTML}{e5cefc}
\definecolor{my-purple-dark}{HTML}{6A1B9A}
\definecolor{my-indigo}{HTML}{3949AB}
\definecolor{my-indigo-light}{HTML}{7986CB}
\definecolor{my-indigo-verylight}{HTML}{9FA8DA}
\definecolor{my-indigo-dark}{HTML}{1A237E}
\definecolor{my-blue}{HTML}{1E88E5}
\definecolor{my-blue-light}{HTML}{64B5F6}
\definecolor{my-blue-verylight}{HTML}{B3E5FC}
\definecolor{my-blue-dark}{HTML}{0D47A1}
\definecolor{my-cyan}{HTML}{00BCD4}
\definecolor{my-cyan-light}{HTML}{4DD0E1}
\definecolor{my-cyan-verylight}{HTML}{80DEEA}
\definecolor{my-cyan-dark}{HTML}{0097A7}
\definecolor{my-teal}{HTML}{009688}
\definecolor{my-teal-light}{HTML}{4DB6AC}
\definecolor{my-teal-verylight}{HTML}{B2DFDB}
\definecolor{my-teal-dark}{HTML}{00695C}
\definecolor{my-green}{HTML}{39ac39}
\definecolor{my-green-light}{HTML}{8cd98c}
\definecolor{my-green-verylight}{HTML}{b3e6b3}
\definecolor{my-green-dark}{HTML}{339933}
\definecolor{my-grass}{HTML}{689F38}
\definecolor{my-grass-light}{HTML}{8BC34A}
\definecolor{my-grass-verylight}{HTML}{AED581}
\definecolor{my-grass-dark}{HTML}{33691E}
\definecolor{my-lime}{HTML}{CDDC39}
\definecolor{my-lime-light}{HTML}{DCE775}
\definecolor{my-lime-verylight}{HTML}{E6EE9C}
\definecolor{my-lime-dark}{HTML}{AFB42B}
\definecolor{my-yellow}{HTML}{fffc29}
\definecolor{my-yellow-light}{HTML}{fffd7a}
\definecolor{my-yellow-verylight}{HTML}{fefdbb}
\definecolor{my-yellow-dark}{HTML}{FFD600}
\definecolor{my-orange}{HTML}{FF8F00}
\definecolor{my-orange-light}{HTML}{FFC107}
\definecolor{my-orange-verylight}{HTML}{ffe5a4}
\definecolor{my-orange-dark}{HTML}{FF6F00}
\definecolor{my-brown}{HTML}{6D4C41}
\definecolor{my-brown-light}{HTML}{795548}
\definecolor{my-brown-verylight}{HTML}{BCAAA4}
\definecolor{my-brown-dark}{HTML}{3E2723}
\definecolor{my-gray}{HTML}{616161}
\definecolor{my-gray-light}{HTML}{9E9E9E}
\definecolor{my-gray-verylight}{HTML}{f0f0f0}
\definecolor{my-gray-dark}{HTML}{424242}
\definecolor{my-steel}{HTML}{546E7A}
\definecolor{my-steel-light}{HTML}{78909C}
\definecolor{my-steel-verylight}{HTML}{B0BEC5}
\definecolor{my-steel-dark}{HTML}{37474F}
\newtheorem{theorem}{Theorem}
\newtheorem{lemma}[theorem]{Lemma}
\newtheorem{corollary}[theorem]{Corollary}
\newtheorem{proposition}[theorem]{Proposition}
\newtheorem{definition}{Definition}
\newtheorem{myex}{Example}
\newcommand{\tuple}[1]{\left\langle #1 \right\rangle}
\DeclareMathOperator*{\argmax}{argmax}
\renewcommand{\phi}{\varphi}
\DeclareMathOperator{\concat}{\oplus}
\newcommand{\projSet}{\mathcal{P}}
\newcommand{\allocSet}{\mathcal{A}}
\newcommand{\allocSetEx}{\mathcal{A}_{\mathit{EX}}}
\newcommand{\profile}{\boldsymbol{A}}
\newcommand{\pbRule}{F}
\newcommand{\truth}{\pi^\star}
\newcommand{\noiseModel}{\mathcal{M}}
\newcommand{\noiseMmaxapp}{\noiseModel_{\mathit{app}}}
\newcommand{\noiseMmaxappNORM}[1]{Z_{#1}^{\mathit{app}}}
\newcommand{\noiseMmaxNashAppCost}{\noiseModel_{\mathit{Ncost}}}
\newcommand{\noiseMmaxNashAppCostNORM}[1]{Z_{#1}^{\mathit{Ncost}}}
\newcommand{\noiseMmaxNashApp}{\noiseModel_{\mathit{Napp}}}
\newcommand{\noiseMmaxNashAppNORM}[1]{Z_{#1}^{\mathit{Napp}}}
\newcommand{\prob}{\mathbb{P}}
\newcommand{\greedyApp}{\pbRule_{\mathit{greed}}}
\newcommand{\appMaximizing}{\pbRule_{\mathit{app}}}
\newcommand{\costAppMaximizing}{\pbRule_{\mathit{cost}}}
\newcommand{\appNashMaximizing}{\pbRule^N_{\mathit{app}}}
\newcommand{\costAppNashMaximizing}{\pbRule^N_{\mathit{cost}}}
\newcommand{\relAppNashMaximizing}{\tilde{\pbRule}^N_{\mathit{app}}}
\newcommand{\relCostAppNashMaximizing}{\tilde{\pbRule}^N_{\mathit{cost}}}
\definecolor{MyOrange}{rgb}{0.88, 0.43, 0.08}
\definecolor{MyGreen}{rgb}{0.28, 0.78, 0.28}
\definecolor{MyRed}{rgb}{0.6, 0, 0}
\definecolor{MyGreen}{rgb}{0, 0.6, 0}
\newcommand{\cmark}{{\color{MyGreen}\ding{51}}}
\newcommand{\xmark}{{\color{MyRed}\ding{55}}}
\title{Epistemic Selection of Costly Alternatives: The Case of Participatory Budgeting}
\author{
	Simon Rey \text{\normalfont and} Ulle Endriss\\[0.5em]
	s.j.rey@uva.nl \text{\normalfont and} u.endriss@uva.nl
}
\date{Institute for Logic, Language and Computation (ILLC)\\[0.25em]University of Amsterdam}
\begin{document}
	\maketitle
	
	\begin{abstract}
		We initiate the study of voting rules for participatory budgeting using the so-called epistemic approach, where one interprets votes as noisy reflections of some ground truth regarding the objectively best set of projects to fund. Using this approach, we first show that both the most studied rules in the literature and the most widely used rule in practice cannot be justified on epistemic grounds: they cannot be interpreted as maximum likelihood estimators, whatever assumptions we make about the accuracy of voters. Focusing then on welfare-maximising rules, we obtain both positive and negative results regarding epistemic guarantees.
	\end{abstract}

	\section{Introduction}
	
	%%%% Setting the stage: Participatory Budgeting
	The term \emph{participatory budgeting} (PB) has been used to describe a range of mechanisms that directly involve citizens in public spending decisions \citep{Caba04}. The basic idea is that people can vote on grassroots projects (e.g., building a playground or funding a cultural event), each of which has a certain cost. The most popular projects---that fit a given budget constraint---then get funded.  In recent years, PB has flourished around the world, making it one of the most popular tools of participatory democracy. At the same time, it also has received a lot of attention in the literature on (computational) social choice \citep{AzSh20, ReMa23}. 
	
	%%%% Motivation for Truth-Tracking / Research Question 
	Given the votes of the citizens, it is not always obvious how to decide which projects to fund (in other words, there are many different voting rules one could consider using). The common approach to choosing a voting rule is the \emph{normative approach}, where we encode normatively desirable properties of voting rules, e.g., properties related to fairness, in the form of axioms and then analyse to what extent a given voting rule will satisfy those axioms. In this paper, we instead explore what is known as the \emph{epistemic approach}---or \emph{truth-tracking approach}---to the analysis and design of voting rules for PB. Both approaches play a central role in the broader field of computational social choice \citep{HBCOMSOC2016}. But in the specific context of PB, the epistemic approach may, at first sight, seem less natural a choice. So let us briefly outline why we nonetheless believe that it is important to explore its potential. 
	
	One reason is that, even in the familiar setting of PB with citizens voting for projects they personally would like to receive public funding, we might interpret these votes as evidence for objective quality. Indeed, whether a given project is a success will often become clear only some time \emph{after} it has been realised: Will local citizens actually use the compost bins? Will the number of mosquitoes go down? Will the new speed camera reduce the number of accidents?\footnote{All these examples are taken from projects that were brought to the vote in Toulouse 2019; see \href{https://jeparticipe.metropole.toulouse.fr/processes/bp2019}{jeparticipe.metropole.toulouse.fr/processes/bp2019} for more details, and in particular the file ``Catalogue des 30 idées soumises au vote''. The data is also hosted on the website \href{http://pabulib.org/?city=Toulouse}{pabulib.org} \citep{FaliszewskiEtAlIJCAI2023}, though the description of the projects is not available there.} So we might think of the citizens casting their votes as agents with bounded rationality who enjoy a noisy view of this \emph{ground truth}. They do not know what the best set of projects to fund is, but each of them is more likely to vote for a good rather than a bad set of projects.
	
	Furthermore, PB as a means to select projects to receive public funding is but one example of a selection process of \emph{costly alternatives}.
	For some other such processes---that are, at least mathematically, equivalent to PB---the existence of a ground truth may be more obvious. Consider, for instance, the case of the \href{https://eternagame.org}{Eterna platform}.\footnote{See \href{https://eternagame.org}{eternagame.org} and \href{https://wikipedia.org/wiki/EteRNA}{wikipedia.org/wiki/EteRNA} for more information.} On this collaborative platform, users can submit different ways of folding a given protein. A subset of the proposed configurations is then synthesised in a laboratory to determine which are most stable. One can think of this as a PB process: the projects are the different protein foldings; their cost is the cost of synthesising them; the budget limit is the amount of money that is allocated to this process; and finally, the protein foldings submitted by a user constitute their approval ballot. Mathematically speaking, this is thus a well-defined PB process. Moreover, there is a clear ground truth here: a set of objectively most stable protein configurations. Interestingly, this is also the motivating example of the first epistemic analysis of multi-winner voting rules,\footnote{In multi-winner voting with approval ballots \citep{LaSk22}, we ask each voter which of the candidates up for election they approve of and then, for some fixed~$k$, need to choose a committee of $k$ candidates. This scenario is mathematically equivalent to PB when all projects have the same cost.} though in a setting without costs \citep{PRS12}.
	
	A last example is that of a selection committee for research grant proposals. In such a committee, the members have to decide which of the grants should be funded. The grants typically have different costs, and there is a maximum amount of money that can be allocated. Then one might argue that the proposals have a ``ground truth'' probability of success that can only be observed \textit{a posteriori}. At selection time, the committee members observe noisy signals regarding this ground truth through the reports of reviewers, and they need to make a decision based on this information.
	
	If we have a clear idea how these noisy views on the ground truth are generated (votes that are cast in the context of PB, protein configurations that are proposed for the EterRNA platform, or approvals submitted for some grant proposals)---that is, if we have a well-defined \emph{noise model}---we, in principle, are able to design a voting rule that maximises the likelihood of returning the ground truth, i.e., the best set of projects that fit our budget.
	Of course, in practice we do not have access to this noise model. Still, if a natural voting rule turns out to be such a \emph{maximum likelihood estimator} (MLE) for a natural choice of noise model, then we can interpret this as an argument for using that rule. Similarly, if we can prove that for a given rule there does not exist \emph{any} noise model that would make that rule an MLE, then that rule will be less suitable in contexts where it is reasonable to assume that there exists some form of ground truth regarding the right decision to be taken.
	
	\paragraph{Contribution.}
	We first analyse the rule overwhelmingly used in practice around the world---the greedy cost approval rule---as well as the rules most studied in the recent literature---the Method of Equal Shares \citep{PPS21, BFLMP23} and the Sequential Phragmén Rule \citep{LCG22}. Using a necessary condition provided by \citet{CoSa05}, we prove that none of these rules can be interpreted as an MLE, even for instances where all projects have the same cost (corresponding to multi-winner voting instances).
	
	We then turn to a family of rules that all satisfy the aforementioned necessary condition: additive argmax rules. These can be thought of as welfare-maximising rules. We focus on eight specific rules based on either utilitarian or Nash social welfare. In the case of utilitarian social welfare, we show that it is impossible to find a noise model for which the most natural rules would be MLEs in the general case. For Nash welfare, the picture is brighter, since for two rules we are able to identify noise models under which they are MLEs.
	
	Still, the overall picture pained by our analysis is largely negative, demonstrating that it is difficult to design voting rules for PB with good epistemic performance, certainly if this requirement comes on top of other, particularly common normative, requirements. Identifying conditions under which more positive results can be achieved---and ideally conditions that can be satisfied for specific application scenarios where the epistemic perspective is most relevant---therefore presents itself as an important challenge for future work in the field.
	
	\paragraph{Related work.}
	%%%% Epistemic vs Axiomatic approach
	The study of PB rules is part of social choice theory, which more generally deals with the design and analysis of voting rules for different kinds of scenarios. Given that for every application domain many different rules can be---and have been---devised, it can be hard for the decision maker to select the rule to be used. To assist in this choice, two main approaches have been developed. The first one, the \emph{normative} or \emph{axiomatic approach} \citep{Arro51, Thom01}, tries to identify voting rules that satisfy certain normative requirements. The second one, the \emph{epistemic approach} \citep{ElSl16, Piva19}, seeks out rules that can recover a ground truth, assuming that the votes received are noisy estimates of that ground truth.  It is the latter approach we follow here.
	
	%%%% Axiomatic vs Epistemic for PB
	Formal work on PB to date instead has followed the axiomatic approach, with a special focus on fairness \citep{ALT18, PPS21, HKPP21, LMR21, LCG22, BFLMP23}, incentive compatibility \citep{FGM16, FPPV21, GKSA19, REH21}, and monotonicity requirements \citep{TaFa19, BBS20, REH20}. We refer the reader to the survey by \citet{ReMa23} for more details.
	
	The epistemic approach has been first applied to the standard voting model \citep{Youn95, CoSa05, CPS14}. Later on, other social choice scenarios have been investigated through the epistemic lens, notably multi-winner elections \citep{PRS12, CPS16}, and judgment aggregation \citep{BoRa06, BDP14, TeEn19}. To the best of our knowledge, the only epistemic study of PB is the one section dedicated to the topic by \citet{GKSA19}, though in the context of divisible PB, i.e., when projects can be partially funded. Interestingly, they show that \emph{knapsack voting}---a PB rule that resembles the greedy cost approval rule in the divisible setting---can be interpreted as a maximum likelihood estimator, while we will see that this is not the case for the greedy cost approval rule in the context of \emph{indivisible}~PB.
	
	\section{Preliminaries}
	\label{sec:prelim}
	
	In this section, we recall the standard model of PB,  used in much of the literature \citep[see, e.g.,][]{TaFa19,AzSh20,ReMa23}, and then define what it means for a PB rule to be a maximum likelihood estimator (MLE).
	
	\subsection{Participatory Budgeting}
	
	% INSTANCES AND PROFILES
	A PB problem is described by an instance $I = \tuple{\projSet, c, b}$, where $\projSet$ is the set of available \emph{projects}, $c: \projSet \rightarrow \mathbb{N}$ is the \emph{cost function}---mapping any given project $p \in \projSet$ to its cost $c(p) \in \mathbb{N}$---and $b \in \mathbb{N}$ is the \emph{budget limit}. We write $c(P)$ instead of $\sum_{p \in P} c(p)$ for sets of projects $P \subseteq \projSet$. 
	%Let $\agentSet$ be the set of \emph{agents} involved in a PB process. When facing a PB instance, agents are asked to submit \emph{approval ballots} representing the subset of projects they approve of. The approval ballot of agent $i \in \agentSet$ is denoted by $A_i \subseteq \projSet$, and the resulting vector $\profile = (A_i)_{i \in \agentSet}$ of approval ballots is called a \emph{profile}. 
	For a given PB instance, we ask several \emph{agents} to each submit an \emph{approval ballot} $A \subseteq \projSet$, resulting in a vector $\profile$ of ballots, one for each agent. Such a vector of approval ballots is called a \emph{profile}. 
	Given two profiles $\profile$ and $\profile'$, we use $\profile \concat \profile'$ to denote the profile obtained by concatenating them.
	
	% ALLOCATIONS
	Given an instance $I = \tuple{\projSet, c, b}$, we need to select a subset of projects $\pi \subseteq \projSet$ to implement. Such a \emph{budget allocation}~$\pi$ has to be \emph{feasible}, i.e., we require $c(\pi) \leq b$. Let $\allocSet(I) = \{\pi \subseteq \projSet \mid c(\pi) \leq b\}$ be the set of feasible budget allocations for~$I$. Moreover, let $\allocSetEx(I)$ be the set of all \emph{exhaustive} budget allocations for~$I$, i.e., of allocations $\pi \in \allocSet(I)$ for which there is no project $p \in \projSet \setminus \pi$ such that $c(\pi \cup \{p\}) \leq b$. 
	
	% RULES
	Computing budget allocations is done by means of \emph{PB rules}. An \emph{irresolute} PB rule $\pbRule$ is a function that takes as input a PB instance $I$ and a profile $\profile$ over $I$, and that returns a nonempty set of feasible budget allocations $\pbRule(I, \profile) \subseteq \allocSet(I)$. A rule is \emph{exhaustive} if $\pbRule(I, \profile) \subseteq \allocSetEx(I)$ for all $I$ and $\profile$.
	
	% RESTRICTED INSTANCES
	Some of our results will apply only to unit-cost instances. An instance $I = \tuple{\projSet, c, b}$ is a \emph{unit-cost} instance if there exists an $\ell \in \mathbb{N}$ such that $(i)$~$c(p) = \ell$ for all projects $p \in \projSet$ and $(ii)$~$b \equiv 0 \bmod \ell$.
	This restriction is particularly interesting since unit-cost instances are equivalent to multi-winner voting problems where one needs to elect a committee of some fixed size~$k$ \citep{FaliszewskiEtAlTRENDS2017}. Candidates can be thought of as projects of cost~$\ell$, so under a budget limit of $b=k\cdot\ell$ exhaustive budget allocations correspond to such committees.
	
	\subsection{The Truth-Tracking Perspective}\label{sec:truthtracking}
	
	Under the truth-tracking perspective, we assume that, for every given instance~$I$, there exists an objectively best feasible budget allocation, the so-called \emph{ground truth} $\truth$, and we would like every reasonable rule to select $\truth$. The ground truth is not known, neither to the agents nor to the decision maker. We will thus assess the quality of PB rules based on their ability to retrieve the ground truth given noisy votes.
	
	Formally, a \emph{noise model} $\noiseModel$ is a generative model that produces random approval ballots for a given instance and ground truth. We represent it as a probability distribution over all approval ballots. For a given instance $I = \tuple{\projSet, c, b}$, ground truth $\truth \in \allocSet(I)$, and approval ballot $A \subseteq \projSet$, we denote by $\prob_\noiseModel(A \mid \truth, I)$ the probability for the noise model $\noiseModel$ to generate ballot $A$ given $I$ and $\truth$.
	For profiles generated by the noise model, ballots are drawn identically and independently from $\noiseModel$.
	
	Suppose the noise model $\noiseModel$ indicates how the voters form their preferences for any given ground truth.\footnote{We stress that here we are making the (implicit) assumption that every voter's ballot is generated by the same noise model. This simplifying assumption, though common in the literature \citep{ElSl16}, does not account for the fact that different voters may have different expertise. Specifically, in the context of PB, a voter living in a specific district may be more likely to accurately judge the quality of a project based in the same district than a voter who lives elsewhere.} Then a good rule should select the outcome that most likely would have generated the observed profile we observe if it were the ground truth plugged into~$\noiseModel$. This is the \emph{maximum likelihood estimator} (MLE) for~$\noiseModel$.
	
	\begin{definition}[Maximum likelihood estimators]
		For a noise model $\noiseModel$, the likelihood of a profile $\profile$ over the instance $I$ and a budget allocation $\pi \in \allocSet(I)$ is defined as:
		\[L_{\noiseModel}(\profile, \pi, I) = \prod_{A \in \profile} \prob_\noiseModel(A \mid \pi, I).\]
		A PB rule $\pbRule$ is said to be the MLE for $\noiseModel$, if for every instance $I$ and every profile $\profile$ we have:
		\[\pbRule(I, \profile) = \argmax_{\truth \in \allocSet(I)} L_{\noiseModel}(\profile, \truth, I).\]
	\end{definition}
	
	\noindent
	In the context of the standard model of voting theory, \citet{CoSa05} identified a necessary condition for a voting rule to be interpretable as an MLE: it should satisfy what we are going to call \emph{weak reinforcement}. This result straightforwardly carries over to the PB setting.
	
	\begin{definition}[Weak reinforcement]
		A PB rule $\pbRule$ is said to be satisfying weak reinforcement if and only if, for every instance $I$ and every two profiles  $\profile$ and $\profile'$, we have:
		\[\pbRule(I, \profile) = \pbRule(I, \profile') \;\Longrightarrow\; \pbRule(I, \profile \concat \profile') = \pbRule(I, \profile).\]
	\end{definition}
	
	\begin{lemma}[\citeauthor{CoSa05}, \citeyear{CoSa05}]
		\label{lem:MLENecessaryCondition}
		If a PB rule $\pbRule$ does not satisfy weak reinforcement, then there exists no noise model $\noiseModel$ for which $\pbRule$ is the MLE.
	\end{lemma}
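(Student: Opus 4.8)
The plan is to prove the contrapositive: assuming $\pbRule$ is the MLE for some noise model $\noiseModel$, I would show that $\pbRule$ necessarily satisfies weak reinforcement. Everything hinges on a single multiplicative property of the likelihood that follows directly from the i.i.d.\ assumption built into the noise model. Since the ballots of a concatenated profile $\profile \concat \profile'$ are exactly those of $\profile$ followed by those of $\profile'$, the product defining the likelihood factorises as
\[L_{\noiseModel}(\profile \concat \profile', \pi, I) = \prod_{A \in \profile \concat \profile'} \prob_\noiseModel(A \mid \pi, I) = L_{\noiseModel}(\profile, \pi, I) \cdot L_{\noiseModel}(\profile', \pi, I)\]
for every feasible allocation $\pi \in \allocSet(I)$.

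Next, I would fix an instance $I$ and two profiles with $\pbRule(I, \profile) = \pbRule(I, \profile')$, calling this common set $S$. Because $\pbRule$ is the MLE, $S$ is precisely the set of maximisers of $L_{\noiseModel}(\profile, \cdot, I)$ and also precisely the set of maximisers of $L_{\noiseModel}(\profile', \cdot, I)$; write $m$ and $m'$ for the two corresponding maximum values. For any $\pi \in S$ we then have $L_{\noiseModel}(\profile \concat \profile', \pi, I) = m \cdot m'$, whereas for any $\pi \notin S$ both factors are strictly smaller than their respective maxima, i.e.\ $L_{\noiseModel}(\profile, \pi, I) < m$ and $L_{\noiseModel}(\profile', \pi, I) < m'$. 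The goal is to conclude that the product is strictly below $m \cdot m'$ off $S$, so that the argmax of the concatenated likelihood is exactly $S$, yielding $\pbRule(I, \profile \concat \profile') = S = \pbRule(I, \profile)$.

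The one genuine subtlety---and the step I would treat most carefully---is the passage from the two strict factorwise inequalities to a strict inequality on the product, since likelihoods are only guaranteed to be non-negative and a factor may vanish. When $m, m' > 0$ this is immediate: for $\pi \notin S$, if $L_{\noiseModel}(\profile', \pi, I) = 0$ the product is $0 < m m'$, and otherwise $L_{\noiseModel}(\profile, \pi, I) \cdot L_{\noiseModel}(\profile', \pi, I) < m \cdot L_{\noiseModel}(\profile', \pi, I) < m m'$. The remaining case is $m = 0$ (and symmetrically $m' = 0$): then every $L_{\noiseModel}(\profile, \pi, I)$ equals $0$, forcing $S = \allocSet(I)$, and the factorisation makes the concatenated likelihood identically $0$, so its argmax is again all of $\allocSet(I) = S$. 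Finiteness of $\allocSet(I)$ guarantees that all argmax sets are well defined and nonempty throughout, which completes the argument.
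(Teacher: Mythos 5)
Your proof is correct and takes essentially the same route as the argument the paper relies on (the paper gives no proof of its own, deferring to \citeauthor{CoSa05}, \citeyear{CoSa05}): the i.i.d.\ assumption makes the likelihood of a concatenated profile factorise as the product of the two likelihoods, so the maximisers of the product are exactly the common maximisers, which is precisely weak reinforcement in contrapositive form. Your explicit handling of the degenerate cases (a vanishing factor off the argmax set, and a maximum likelihood of zero forcing $S = \allocSet(I)$) is a careful addition that the standard presentation glosses over, and it is handled correctly.
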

	
	\noindent Note that this result applies for any set of possible ground truths, so also if we assume the ground truth to be exhaustive.
	
	As a first application of this lemma, let us consider what 	perhaps is the most widely used rule in real-world PB elections, namely the \emph{greedy cost approval rule} $\greedyApp$ \citep{GKSA19,AzSh20}. Under this rule, we select projects in order of the number of approvals they receive, skipping over any projects that would lead us to exceed the budget.\footnote{The name of the rule is linked to the fact that it can be thought of as approximating the maximum utilitarian social welfare under the cost satisfaction function \citep{ReMa23}. These concepts will be formally introduced in Section~\ref{subsec:Utilitarian_Social_Welfare}.} 
	%a rule that approximates the maximum utilitarian social welfare when using the cost satisfaction function. Since it is so widely used, we also include it in our analysis.	
	%	
	%	Before defining the rule formally, we define the \emph{approval score} of a project $p$ for a profile $\profile$ as $n_p^{\profile} = |\{A \in \profile \mid p \in A\}|$. We can now provide the definition.
	%	
	%	\begin{definition}[Greedy cost approval rule]
	%		Given a strict ranking~$\rhd$ on the set of all projects, $\greed(\rhd)$ is the budget allocation we obtain when we examine all projects in order of $\rhd$ and select a project whenever the total cost of the selected projects does not exceed the budget limit.
	%				
	%		For a given instance $I$ and profile $\profile$, the greedy cost approval rule $\greedyApp$ returns the set of all $\greed(\rhd)$ for any $\rhd$ such that, for all $p, p' \in \projSet$, we have $p \rhd p'$ whenever $n_p^{\profile} > n_{p'}^{\profile}$.
	%	\end{definition}
	%	
	%	\noindent 
	Note that $\greedyApp$ is exhaustive.
	%	
	%Being extensively used in practice, this rule deserves a special focus in our formal analysis. However, we can easily show that it cannot be interpreted as an MLE.	
	Unfortunately, we can easily show that it cannot be interpreted as an MLE.
	
	\begin{proposition}\label{prop:greedy}
		There exists no noise model $\noiseModel$ such that the greedy cost approval rule $\greedyApp$ is the MLE for~$\noiseModel$.
	\end{proposition}
	
	\begin{proof}
		Consider an instance $I$ with three projects denoted by $p_1$, $p_2$ and $p_3$, a budget limit of $b = 4$, and costs as shown in the table below. Moreover, consider two profiles $\profile$ and $\profile'$ in which the approval scores are as presented below.
		\begin{center}
			\begin{tabular}{ccccc}
				\toprule
				& \textbf{Cost} & \begin{tabular}{@{}c@{}}\textbf{Approval score} \\ \textbf{in $\profile$} \end{tabular} & \begin{tabular}{@{}c@{}}\textbf{Approval score} \\ \textbf{in $\profile'$} \end{tabular} &  \begin{tabular}{@{}c@{}}\textbf{Approval score} \\ \textbf{in $\profile \concat \profile'$} \end{tabular} \\
				\midrule
				$p_1$ & $2$ & $10$ & $1$ & $11$ \\
				$p_2$ & $2$ & $1$ & $10$ & $11$ \\
				$p_3$ & $3$ & $9$ & $9$ & $18$ \\
				\bottomrule
			\end{tabular}
		\end{center}
		One can easily check that $\greedyApp$ would return $\{\{p_1, p_2\}\}$ on both $\profile$ and $\profile'$. However, on the joint profile $\profile \concat \profile'$, it would return $\{\{p_3\}\}$. Weak reinforcement is thus violated. The claim now follows from Lemma~\ref{lem:MLENecessaryCondition}.
	\end{proof}

	\section{Proportional PB Rules}
	
	A large part of recent research on PB has been devoted to the study of \emph{proportional} rules, i.e., rules that treat groups of agents fairly. In this section, we focus on the most prominent ones---Sequential Phragmén \citep{LCG22, BFLMP23} and approval-based variants of the Method of Equal Shares (MES) \citep{PPS21}---and show that they also cannot be interpreted as MLEs.
	
	\begin{definition}[Sequential Phragmén]
		Given an instance $I$ and a profile $\profile$, the Sequential Phragmén rule constructs budget allocations~$\pi$ using the following continuous process.
		
		Initially, $\pi$ is the empty set. Voters receive money in a virtual currency. They all start with a budget of~0 and that budget continuously increases as time passes. At time~$t$ a voter will have received an amount~$t$ of money. For any time $t$, let $P^\star_t$ be the set of projects $p \in \projSet$ for which the approvers together have at least $c(p)$ money available. As soon as, for a given $t$, the set $P^\star_t$ is non-empty, if there exists a $p \in P^\star_t$ such that $c(\pi \cup \{p\}) > b$, the process stops; otherwise one project from $P^\star_t$ is selected, the budget of its approvers is set to 0, and the process resumes.
		
		The outcome of Sequential Phragmén is the set of all budget allocations constructed by the procedure above (for all possible ways of breaking ties between the projects in $P^\star_t$).
	\end{definition}
	
	\noindent 
	Observe that, due to the fact that each voter's budget increases continuously, the approvers whose project~$p$ is selected at a given time~$t$ in fact will have accumulated \emph{exactly} $c(p)$, not more, which justifies setting their budgets back all the way to~0. Also note that with the above stopping condition---required to guarantee a property known as priceability \citep{LCG22}---Sequential Phragmén is not exhaustive. But it is exhaustive on unit-cost instances.
	
	Unfortunately, whatever assumptions we are willing to make regarding the noise model generating the approval sets of voters, Sequential Phragmén cannot be used to track the ground truth, not even for unit-cost instances.
	
	\begin{proposition}\label{prop:phragmen}
		There exists no noise model $\noiseModel$ such that Sequential Phragmén is the MLE for $\noiseModel$, not even on unit-cost instances with the additional assumption that the ground truth is exhaustive.
	\end{proposition}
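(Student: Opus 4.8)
The plan is to invoke Lemma~\ref{lem:MLENecessaryCondition} and reduce the whole task to exhibiting a single counterexample showing that Sequential Phragmén violates weak reinforcement. That is, I need one unit-cost instance $I$ (with an exhaustive ground truth available, though since the lemma is stated for arbitrary ground-truth sets this extra assumption costs nothing) together with two profiles $\profile$ and $\profile'$ such that $\pbRule(I, \profile) = \pbRule(I, \profile')$ but $\pbRule(I, \profile \concat \profile') \neq \pbRule(I, \profile)$. Because the instance is unit-cost, Sequential Phragmén is exhaustive there, so the outcomes are genuine committees of fixed size $k = b/\ell$, which keeps the bookkeeping clean and lets me read off winners as the projects selected by the continuous load-balancing process.

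First I would fix the smallest instance that still has room for the phenomenon: take unit costs $\ell = 1$, a handful of projects (say three or four), and budget $b$ giving committee size $k = 1$ or $k = 2$. I would then design $\profile$ and $\profile'$ so that each, on its own, produces exactly the same winning allocation---ideally by making them structurally similar or even permuted copies of one another---while their concatenation shifts the balance. The key mechanism to exploit is that Phragmén's selection depends on the \emph{times} at which approver groups accumulate enough virtual currency to buy a project, and these times depend nonlinearly on group sizes: doubling a profile scales all accumulation rates uniformly, but concatenating two \emph{different} profiles that happen to agree on their output need not preserve the relative ordering of these purchase times. So I would pick $\profile$ and $\profile'$ that agree on the output but differ in how the ``load'' is distributed among voters, so that in $\profile \concat \profile'$ a different project reaches its funding threshold first.

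The main obstacle is the interaction between weak reinforcement and irresoluteness: since Phragmén returns a \emph{set} of allocations (all tie-breakings), the equality $\pbRule(I, \profile) = \pbRule(I, \profile')$ must hold as sets, and the inequality in the conclusion must also hold as sets. This means I cannot simply rely on a tie being broken one way versus another---I must ensure the \emph{set} of tied winners genuinely changes, not merely the representative chosen. The cleanest way to sidestep subtle tie-analysis is to engineer $\profile$ and $\profile'$ so that each individually has a \emph{unique} winning allocation (no ties, so $\pbRule$ is a singleton), and arrange that these two singletons coincide, while the concatenation either introduces a tie (enlarging the output set) or selects a different unique winner. Verifying the exact purchase times in the concatenated profile---solving for the first $t$ at which some $P^\star_t$ becomes nonempty and checking which project wins---is the one genuinely computational step, and I would carry it out explicitly for the chosen small instance rather than in general.

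Finally, once the counterexample is verified, the proof concludes in one line: weak reinforcement fails, so by Lemma~\ref{lem:MLENecessaryCondition} no noise model $\noiseModel$ makes Sequential Phragmén an MLE, and since the witnessing instance is unit-cost with an exhaustive ground truth, the stronger claim in the proposition follows immediately.
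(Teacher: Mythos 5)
Your overall strategy is exactly the paper's: invoke Lemma~\ref{lem:MLENecessaryCondition} to reduce the claim to a violation of weak reinforcement, and exhibit that violation on a unit-cost instance where all allocations involved are exhaustive (so the extra assumption on the ground truth costs nothing). Your analysis of the mechanism (purchase times depend nonlinearly on approver-group sizes, so concatenation can reorder them) and of the irresoluteness pitfall (outcomes must differ \emph{as sets}, ideally singleton versus singleton) is also correct and mirrors what the paper does.

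However, there is a genuine gap: you never actually produce the counterexample. Everything after ``I would design $\profile$ and $\profile'$ so that\ldots'' is a description of properties the profiles should have, not a construction, and the existence of such profiles is precisely the non-trivial content of the proposition. It is not automatic that they exist: for instance, your suggested smallest setting, committee size $k=1$, provably cannot work, since with a single slot Sequential Phragm\'en just selects the approval-score maximisers, and the argmax of a sum of scores contains every common argmax of the summands while excluding everything else---so weak reinforcement holds there. The paper needs four projects, budget $b=3$, and profiles of five and four voters (yielding $\{\{p_1,p_3,p_4\}\}$ on each profile separately but $\{\{p_1,p_2,p_3\}\}$ on the concatenation), together with an explicit verification of the purchase times in all three runs. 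Until you exhibit concrete profiles and carry out that computation, what you have is a correct proof plan, not a proof.
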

	\begin{proof}
		Consider an instance $I$ with four projects denoted by $p_1$, $p_2$, $p_3$, and $p_4$, all of cost $1$, and budget limit of $b = 3$. We consider two profile, $\profile^1$ and $\profile^2$, as presented below, where $\profile^1$ is on the left and $\profile^2$ is on the right.

		\begin{center}
			\begin{tabular}{rcccc}
				\toprule
				& $p_1$ & $p_2$ & $p_3$ & $p_4$ \\
				\midrule
				Cost & 1 & 1 & 1 & 1 \\
				\midrule
				$A_1^1$ & \cmark & \xmark & \xmark & \xmark \\
				$A_2^1$ & \cmark & \xmark & \cmark & \cmark \\
				$A_3^1$ & \xmark & \cmark & \cmark & \cmark \\
				$A_4^1$ & \xmark & \cmark & \cmark & \cmark \\
				$A_5^1$ & \xmark & \cmark & \cmark & \cmark \\
				\midrule
				\multicolumn{5}{c}{$b = 3$} \\
				\bottomrule
			\end{tabular}
			\hspace{4em}
			\begin{tabular}{rcccc}
				\toprule
				& $p_1$ & $p_2$ & $p_3$ & $p_4$ \\
				\midrule
				Cost & 1 & 1 & 1 & 1 \\
				\midrule
				$A_1^2$ & \xmark & \cmark & \xmark & \xmark \\
				$A_2^2$ & \cmark & \xmark & \cmark & \xmark \\
				$A_3^2$ & \cmark & \xmark & \cmark & \xmark \\
				$A_4^2$ & \cmark & \xmark & \xmark & \cmark \\
				$A_5^2$ & \cmark & \xmark & \cmark & \cmark \\
				\midrule
				\multicolumn{5}{c}{$b = 3$} \\
				\bottomrule
			\end{tabular}
		\end{center}
		
		\noindent We claim that on both profile $\profile^1$ and profile $\profile^2$, Sequential Phragmén outputs a single budget allocation, namely $\pi = \{p_1, p_3, p_4\}$. Remember that, in the unit-cost setting, Sequential Phragmén is exhaustive. The budget allocation $\pi$ indeed is exhaustive.
		
		For $\profile^1$, after $\nicefrac{1}{2}$ units of money have been distributed, both $p_3$ and $p_4$ will have been bought at a price of $\nicefrac{1}{4}$ each. Once an additional $\nicefrac{1}{4}$ units of money have been injected, project $p_1$ is bought as well, making $\pi$ the unique budget allocation returned by Sequential Phragmén.
		
		For $\profile^2$, first $p_1$ is bought at a price of $\nicefrac{1}{4}$, then $p_3$ at a price of $\nicefrac{1}{3}$. Finally, after $\nicefrac{1}{3}$ additional units of money have been distributed, project $p_4$ is bought (at that time, the supporters of project $p_2$ have collected $\nicefrac{1}{4} + \nicefrac{2}{3} < 1$ money). The final outcome is thus indeed~$\{\pi\}$.
		
		Now, consider the joint profile $\profile^3 = \profile^1 \concat \profile^2$, and let us detail the run of Sequential Phragmén on $I$ and $\profile^3$. The first project to be bought is $p_3$ at price $\nicefrac{1}{7}$. Then, once an extra $\nicefrac{5}{42}$ units of money have been distributed, project $p_1$ is bought as well. At that time, the supporters of $p_4$ who do not approve of 
		$p_2$ have no money since they approve of $p_1$. On the other hand, the only supporter of $p_2$ who does not approve of $p_4$ has a strictly positive amount of money. Project $p_2$ will then be the last project selected (after another $\nicefrac{2}{21}$ units of money have been injected). Overall, the outcome will be $\{\{p_1, p_2, p_3\}\} \neq \{\pi\}$. 
		
		We have thus proven that the Sequential Phragmén rule fails weak reinforcement. Lemma~\ref{lem:MLENecessaryCondition} then concludes the proof. Note that all budget allocations we considered are exhaustive; the result thus also applies if we only focus on exhaustive ground truths.
	\end{proof}
	
	\noindent
	The same bad news apply to the MES-based rules. These rules are parametrised by a measure of the satisfaction of the voters. We call \emph{satisfaction function} (on singletons), any mapping from projects $p$ to satisfaction levels $\mu(p) \in \mathbb{R}_{>0}$.\footnote{Note that \citet{BFLMP23} give a more complete definition of satisfaction functions. Since we only need to discuss satisfaction of single projects here, we simplified the definition.} We can now define MES with respect to satisfaction function $\mu$.
	
	\begin{definition}[MES$_{\mu}$]
		Given an instance $I$, a profile $\profile = (A_1,\ldots,A_n)$ with $n$ agents, and a satisfaction function $\mu$, MES$_{\mu}$ constructs budget allocations $\pi$,
		initially empty, iteratively as follows.
		Every agent~$i$ is initially assigned a budget $b_i = \nicefrac{b}{n}$ of virtual money. Given a budget allocation $\pi$, a project $p \in \projSet \setminus \pi$ is said to be $\alpha$-affordable, for $\alpha \in \mathbb{R}_{\geq 0}$ if:
		\[\sum_{i \,\mid\, p \in A_i} \min(b_i, \alpha \cdot \mu(p)) \enspace \geq \enspace c(p).\]
		At a given round with current budget allocation $\pi$, if no project is $\alpha$-affordable for any
		$\alpha$, MES$_{\mu}$ terminates.
		Otherwise, let $P^\star$ be the set of projects that are $\alpha^*$-affordable for a minimum 
		$\alpha^*$. The rule selects one project $p \in P^\star$ ($\pi$ is updated to
		$\pi \cup \{p\}$), and every approver~$i$ of $p$ sees their budget reduced by $\min(b_i, \alpha \cdot \mu(p))$.
		
		The outcome of MES$_{\mu}$ is the set of all budget allocations constructed by the procedure above (for all possible ways of breaking ties between the projects in~$P^\star$).
	\end{definition}
	
	\noindent Note that MES$_\mu$ fails to be an exhaustive rule, for any satisfaction function $\mu$ and even on unit-cost instances. 
	
	We show that for no $\mu$ can MES$_{\mu}$ be interpreted as an MLE.
	
	\begin{proposition}
		For any given satisfaction function $\mu$, there is no noise model $\noiseModel$ such that MES$_{\mu}$ is the MLE for $\noiseModel$, not even on unit-cost instances.
	\end{proposition}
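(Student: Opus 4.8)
By Lemma~\ref{lem:MLENecessaryCondition} it suffices to show that $\mathrm{MES}_\mu$ fails weak reinforcement. So the plan is to exhibit, for every satisfaction function $\mu$, a unit-cost instance $I$ together with two profiles $\profile^1$ and $\profile^2$ such that $\mathrm{MES}_\mu(I,\profile^1) = \mathrm{MES}_\mu(I,\profile^2)$ while $\mathrm{MES}_\mu(I,\profile^1 \concat \profile^2)$ differs from both. In contrast to Sequential Phragmén, $\mathrm{MES}_\mu$ is not exhaustive even on unit-cost instances, and I would make this non-exhaustiveness the engine of the counterexample.

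The mechanism I would use is budget depletion through overlapping approvals. I would single out two unit-cost projects $p$ and $q$ and arrange the two profiles so that, in each of them in isolation, $\mathrm{MES}_\mu$ first buys $p$, and the purchase of $p$ drains the virtual budget of enough of $q$'s approvers that those $q$-approvers who retain money no longer hold a combined budget of $\ell$. Consequently $q$ (and every other project) becomes unaffordable, $\mathrm{MES}_\mu$ halts non-exhaustively even though budget-room remains, and both profiles return the singleton outcome $\{\{p\}\}$. In the concatenated profile $\profile^1 \concat \profile^2$, however, the $q$-approvers contributed by one of the two profiles do \emph{not} approve $p$, so buying $p$ leaves them untouched; these fresh, undrained approvers push the available budget for $q$ back above $\ell$, so $\mathrm{MES}_\mu$ proceeds to buy $q$ and returns $\{\{p,q\}\}\neq\{\{p\}\}$.

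The delicate point is that $\mu$ influences both which project attains the minimum affordability parameter and how much each approver pays, so no single fixed instance can force the desired run for all $\mu$ at once; I would therefore construct $I$, $\profile^1$, and $\profile^2$ as functions of the given $\mu$. The lever is the following $\mu$-robustness observation: whenever a round has a \emph{unique} affordable project $p$ and the cap $\min(b_i,\alpha^\star\mu(p))$ binds for none of its approvers, then $p$ is selected (trivially, as the only candidate) and each of its $a_p$ approvers pays exactly
\[\alpha^\star \mu(p) = \frac{\ell}{a_p},\]
a quantity independent of $\mu$. Hence, if the instance is designed so that at every round of all three runs exactly one project is affordable and the agents' budgets are large enough that no cap ever binds, the entire trajectory of $\mathrm{MES}_\mu$---selections \emph{and} residual budgets---is the same for every $\mu$, and the combinatorial violation above then holds uniformly in $\mu$. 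Enforcing this single-affordable-project structure for the given $\mu$ may require choosing the sizes of the voter groups (and adding inert padding projects and voters to fix the per-agent budget) appropriately, which is harmless on unit-cost instances.

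The main obstacle is thus combinatorial bookkeeping rather than probability: I must choose the group sizes and the overlap pattern so that, simultaneously, (i) exactly one project is affordable at every round of $\profile^1$, $\profile^2$, and $\profile^1\concat\profile^2$, with no cap binding; (ii) $p$ is bought first in all three runs; (iii) the shared $p$--$q$ approvers are numerous enough in each single profile to render $q$ unaffordable once $p$'s approvers are drained; yet (iv) the cross-profile $q$-only approvers restore $q$'s affordability in the union---all while the per-agent budget $b/n$ shrinks as the two electorates are merged. Once such sizes are fixed, verifying the three runs reduces to a finite computation of the successive $\alpha^\star$ values, after which Lemma~\ref{lem:MLENecessaryCondition} immediately yields the claim.
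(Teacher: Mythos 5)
You follow the paper's high-level route---invoke Lemma~\ref{lem:MLENecessaryCondition} and exhibit a failure of weak reinforcement driven by budget draining and non-exhaustiveness---but the specific mechanism you build the proof on is unsound: your requirement~(i), the single-affordable-project structure that is supposed to buy $\mu$-independence, makes a counterexample \emph{impossible}. Three observations show this. First, during a run of MES$_\mu$ budgets never increase, and a project is $\alpha$-affordable for \emph{some} $\alpha$ precisely when its approvers' current budgets sum to at least its cost; hence any project ever selected must already be affordable in round~1. Second, under requirement~(i) each of $\profile^1$ and $\profile^2$ has exactly one round-1-affordable project, namely $p$ (the project they output), so each single run buys $p$ and stops. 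Third, round-1 affordability passes to the concatenation by a mediant argument: writing $N^j_r$ for the approvers of project $r$ in $\profile^j$, the inequalities $|N^1_r|\tfrac{b}{n_1}\geq c(r)$ and $|N^2_r|\tfrac{b}{n_2}\geq c(r)$ imply $(|N^1_r|+|N^2_r|)\tfrac{b}{n_1+n_2}\geq c(r)$, and the same holds with both inequalities strict and reversed. So in $\profile^1\concat\profile^2$ the project $p$ is again the unique round-1-affordable project; the concatenated run buys $p$ and stops, all three outputs equal $\{\{p\}\}$, and weak reinforcement is \emph{not} violated. In particular, your target pattern is unreachable this way: for $q$ to be bought in the concatenation it would have to be initially affordable there, hence initially affordable in at least one of the two single profiles, contradicting~(i). (Independently of this, the proposal never exhibits a concrete instance---the ``combinatorial bookkeeping'' is deferred---but the point above shows that no instance satisfying (i)--(iv) exists.)

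The paper handles the $\mu$-dependence in the opposite way: it lets both projects be affordable simultaneously and makes the construction symmetric, so that the violation holds \emph{whichever} project MES$_\mu$ selects first. With two unit-cost projects and $b=2$, on $\profile^1=(\{p_1\},\{p_2\})$ the approvers are disjoint and on $\profile^2=(\{p_1,p_2\},\{p_1,p_2\})$ money is plentiful, so both runs buy both projects, for every $\mu$ and every tie-breaking; on $\profile^1\concat\profile^2$ each agent holds only $\nicefrac{1}{2}$, whichever project is bought first costs its three approvers $\nicefrac{1}{3}$ each, and the other project's supporters are then left with $\nicefrac{1}{2}+2\cdot\nicefrac{1}{6}=\nicefrac{5}{6}<1$. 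Note the direction of the violation: the single profiles yield the \emph{larger} allocation and the concatenation a strictly smaller one, which is the direction compatible with the fact that concatenation dilutes per-agent budgets and increases the drain from overlapping approvers. Your reversed pattern ($\{\{p\}\}$ on the singles, $\{\{p,q\}\}$ on the concatenation) could only be arranged by abandoning requirement~(i), i.e., by allowing $p$ and $q$ to be affordable at once and tuning group sizes to the given $\mu$ so that $p$ wins the race---exactly the $\mu$-dependent tie-breaking your lever was introduced to avoid, and bookkeeping your proposal does not carry out.
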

	
	\begin{proof}
		Consider an instance $I$ with two projects denoted by $p_1$ and $p_2$, both of cost 1, and a budget limit $b = 2$. Let $\mu$ be an arbitrary satisfaction function.
		
		Consider the two profiles $\profile^1 = (\{p_1\}, \{p_2\})$ and $\profile^2 = (\{p_1, p_2\}, \{p_1, p_2\})$. We claim that on both of these profiles, MES$_{\mu}$ would return $\pi = \{\{p_1, p_2\}\}$. Indeed, on $\profile^1$ both agents receive 1 unit of money and can both afford the project they approve of. On $\profile^2$ both agents approve of all the projects and can afford them. Now, for $\profile^3 = (\{p_1\}, \{p_2\}, \{p_1, p_2\}, \{p_1, p_2\}) = \profile^1 \concat \profile^2$, we claim that MES$_{\mu}$ would return either $\{\{p_1\}\}$, $\{\{p_2\}\}$, or $\{\{p_1\}, \{p_2\}\}$. Here the initial budget is $\nicefrac{1}{2}$ for each agent. Thus, the approvers of $p_1$ collectively have $\nicefrac{3}{2}$ units of money, and the same is true for $p_2$. Since $\mu(p) > 0$ for both $p_1$ and $p_2$ (by definition of satisfaction functions) and since we can choose $\alpha$ arbitrarily large, this implies that MES$_{\mu}$ would select either $p_1$ or $p_2$ in the first round. Let $p^\star$ be the selected project and $p$ the other project. To buy $p^\star$, all its approvers need to pay $\nicefrac{1}{3}$. The approvers of $p$ are thus now left with $\nicefrac{1}{2} + 2 \cdot (\nicefrac{1}{2} - \nicefrac{1}{3}) = \nicefrac{1}{2} + \nicefrac{1}{3} < 1$, which is not enough to afford $p$. There is thus no way for MES$_{\mu}$ to return $\{\{p_1, p_2\}\}$ on $\profile^3$.
		
		We have thus shown that MES$_{\mu}$ fails weak reinforcement. Lemma~\ref{lem:MLENecessaryCondition} then concludes the proof.
	\end{proof}
	
	\noindent
	Note that for the proofs of both the results in this section, the outcomes on the initial profiles are always disjoint from the outcomes on the joint profiles. This implies that even resolute versions of the rules (obtained by introducing some form of tie-breaking), or any refinement, would also fail weak reinforcement.

	\section{Monotonic Argmax Rules}
	\label{sec:Additive_Argmax_Rules}
	
	As we have seen, our first obstacle to finding rules that are MLEs is that none of the proportional rules we considered satisfy weak reinforcement. Aiming for more positive results, we will now follow a different approach: instead of checking whether known PB rules do satisfy weak reinforcement, we will start from rules we know satisfy it, and investigate their epistemic abilities. To this end, we will thus focus on \emph{monotonic argmax rules}, a large class of rules, all of which satisfy weak reinforcement.	
	We start by defining both what an argmax rule is and what makes such a rule monotonic.
	
	\begin{definition}[Monotonic Argmax Rules]
		A PB rule $\pbRule$ is called an \emph{argmax rule} if there exists a function $f$, taking as input an instance $I$, a profile $\profile$, and a budget allocation $\pi$, and returning a number $f(I, \profile, \pi) \in \mathbb{R}$, such that for all instances $I$ and all profiles $\profile$, we have:
		\[\pbRule(I, \profile) = \argmax_{\pi \in \allocSet(I)} f(I, \profile, \pi).\]
		An argmax rule defined via the function $f$ is called \emph{monotonic} if for every two profiles $\profile$ and $\profile'$ and every two budget allocations $\pi$ and $\pi'$, the following two conditions hold:
		\begin{enumerate}
			\item $\displaystyle \left.
			\begin{array}{r}
				f(I, \profile, \pi) < f(I, \profile, \pi')\\
				f(I, \profile', \pi) < f(I, \profile', \pi')
			\end{array}
			\right\} \Longrightarrow f(I, \profile \concat \profile', \pi) < f(I, \profile \concat \profile', \pi'),$
			\item $\displaystyle \left.
			\begin{array}{r}
				f(I, \profile, \pi) = f(I, \profile, \pi')\\
				f(I, \profile', \pi) = f(I, \profile', \pi')
			\end{array}
			\right\} \Longrightarrow f(I, \profile \concat \profile', \pi) = f(I, \profile \concat \profile', \pi').$
		\end{enumerate}
	\end{definition}
	
	\noindent Note that \emph{every} rule is an argmax rule---$f$ can simply be the indicator function on the outcome of the rule for a given instance and profile---but not all rules are monotonic.
	
	Are the monotonic argmax rules good candidates for being MLEs? Yes, they are, as we can show that monotonic argmax rules all satisfy weak reinforcement.
	
	\begin{proposition}
		\label{prop:ArgmaxRules_WeakReinforcement}
		Every monotonic argmax rule satisfies weak reinforcement.
	\end{proposition}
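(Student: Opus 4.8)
The plan is to exploit that, for an argmax rule $\pbRule$ defined via a function $f$, the set $\pbRule(I,\profile)$ is \emph{by definition} the full set of maximizers of $f(I,\profile,\cdot)$ over the finite set $\allocSet(I)$. Fix an instance $I$ and two profiles $\profile$ and $\profile'$ with $\pbRule(I,\profile) = \pbRule(I,\profile') =: S$. Since $\allocSet(I)$ is finite and nonempty, $S$ is a nonempty set, and it is simultaneously the argmax set of $f(I,\profile,\cdot)$ and of $f(I,\profile',\cdot)$. I fix once and for all a reference allocation $\hat\pi \in S$. My goal is then to show that the argmax set of $f(I,\profile\concat\profile',\cdot)$ is again exactly $S$, which is precisely the conclusion of weak reinforcement.

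First I would compare, for an arbitrary $\pi \in \allocSet(I)$, the joint value $f(I,\profile\concat\profile',\pi)$ against $f(I,\profile\concat\profile',\hat\pi)$, splitting into two cases according to whether $\pi$ lies in $S$. If $\pi \in S$, then both $\pi$ and $\hat\pi$ maximize $f$ in each of the two profiles separately, so $f(I,\profile,\pi) = f(I,\profile,\hat\pi)$ and $f(I,\profile',\pi) = f(I,\profile',\hat\pi)$; the second monotonicity condition then yields $f(I,\profile\concat\profile',\pi) = f(I,\profile\concat\profile',\hat\pi)$. If instead $\pi \notin S$, then $\pi$ fails to be a maximizer in $\profile$ while $\hat\pi$ is one, so $f(I,\profile,\pi) < f(I,\profile,\hat\pi)$, and likewise $f(I,\profile',\pi) < f(I,\profile',\hat\pi)$; the first monotonicity condition then gives the strict inequality $f(I,\profile\concat\profile',\pi) < f(I,\profile\concat\profile',\hat\pi)$.

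Combining the two cases shows that every element of $S$ attains the value $f(I,\profile\concat\profile',\hat\pi)$ under the joint profile, while every allocation outside $S$ scores strictly below it. Hence the maximizers of $f(I,\profile\concat\profile',\cdot)$ are precisely the members of $S$, i.e. $\pbRule(I,\profile\concat\profile') = S = \pbRule(I,\profile)$, as required. The only delicate step is the strictness in the second case: it relies on reading $\pbRule(I,\profile) = S$ as the claim that $S$ is the \emph{entire} argmax set, so that $\pi \notin S$ genuinely forces $f(I,\profile,\pi)$ strictly below the maximum rather than merely at most the maximum. Finiteness of $\allocSet(I)$ is what guarantees these maxima are attained and that $S$ is nonempty, so no separate existence argument is needed.
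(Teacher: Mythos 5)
Your proof is correct and takes essentially the same approach as the paper's: you apply the second monotonicity condition to pairs of winning allocations and the first to winner/non-winner pairs, then conclude that the argmax set under $\profile \concat \profile'$ is exactly $S$. The only cosmetic difference is that you compare every allocation against a single fixed reference $\hat\pi \in S$ rather than quantifying over all pairs, which makes the final ``hence the joint maximizers are precisely $S$'' step slightly more explicit than the paper's wording.
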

	
	\begin{proof}
		Consider the monotonic argmax rule $\pbRule$ defined by the function $f$. Let $I$ be an instance, and $\profile$ and $\profile'$ two profiles over $I$ such that $\pbRule(I, \profile) = \pbRule(I, \profile')$. We show that we also have $\pbRule(I, \profile \concat \profile') = \pbRule(I, \profile)$.
		
		Consider a budget allocation $\pi \in \pbRule(I, \profile)$. Since $\pbRule$ is an argmax rule, for all $\pi' \in \allocSet(I) \setminus \pbRule(I, \profile)$, we know that $f(I, \profile, \pi) > f(I, \profile, \pi')$. This also holds for $\profile'$. By the definition of a monotonic argmax rule, we immediately get that $f(I, \profile \concat \profile', \pi) > f(I, \profile \concat \profile', \pi')$.
		
		Moreover, for any two budget allocations $\pi, \pi' \in \pbRule(I, \profile)$, we have $f(I, \profile, \pi) = f(I, \profile, \pi')$. The same also holds for $\profile'$. Since $\pbRule$ is monotonic, we thus immediately get that $f(I, \profile \concat \profile', \pi) = f(I, \profile \concat \profile', \pi')$.
		
		Overall, we proved that $(i)$ no budget allocation that was winning under $\profile$ or $\profile'$ is dominated under $\profile \concat \profile'$, and $(ii)$ that all budget allocations that were winning under $\profile$ and $\profile'$ all have the same score. It is then immediate that $\pbRule(I, \profile \concat \profile') = \pbRule(I, \profile)$.
	\end{proof}
	
	\noindent In what follows we will introduce and study several concrete examples of monotonic argmax rules, based either on the Nash or on the utilitarian social welfare.
	
	\subsection{Nash Social Welfare}
	\label{subsec:Nash_Social_Welfare}
	
	We first study rules based on the Nash social welfare. It tries to reach balanced outcomes by measuring the score of a budget allocation as the product of the agents' levels of satisfaction.
	The concept of Nash social welfare provides strong guarantees in the context of the fair allocation of indivisible items \citep{CKM+19}. It has also been identified as an appealing rule for PB \citep{RoTa21}.
	
	\subsubsection{Cardinality and Cost Satisfaction}
	
	We first consider two common measures of satisfaction: based on the cardinality and on the cost of approved and selected projects \citep{TaFa19,ReMa23}. This gives rise to two argmax rules, $\appNashMaximizing$ and $\costAppNashMaximizing$, defined via the following functions:
	\begin{align*}
		f^N_{\mathit{app}}(I, \profile, \pi) & = 
		\begin{cases}
			\sum_{A \in \profile} \log(|A \cap \pi|) & \textit{if for all } A' \in \profile, |A' \cap \pi| \neq 0, \\
			0 & \textit{otherwise.}
		\end{cases} \\
		f^N_{\mathit{cost}}(I, \profile, \pi) & = 
		\begin{cases}
			\sum_{A \in \profile} \log(c(A \cap \pi)) & \textit{if for all } A' \in \profile, c(A' \cap \pi) \neq 0, \\
			0 & \textit{otherwise.}
		\end{cases}.
	\end{align*}
	
	\noindent It can be checked that these two argmax rules are indeed monotonic. From Proposition~\ref{prop:ArgmaxRules_WeakReinforcement}, we thus know that they satisfy weak reinforcement. Note that one would need to take the exponential of these function to make the Nash social welfare appear.
	
	Interestingly, under the common and natural assumption that all projects are approved by at least one agent, these two rules are exhaustive.
	
	We start our investigation by introducing a noise model, denoted by $\noiseMmaxNashAppCost$, for which for all $I$, $A$ and $\truth$, we have:
	\[\prob_{\noiseMmaxNashAppCost} (A \mid \truth, I) = \frac{1}{\noiseMmaxNashAppCostNORM{\truth}} c(A \cap \truth),\]
	where, $\noiseMmaxNashAppCostNORM{\truth}$ is a suitable normalisation factor ensuring that $\noiseMmaxNashAppCost$ is a well-defined probability distribution.
	
	Under this noise model, the probability of generating a given ballot~$A$ increases with the cost of the ground-truth projects in~$A$. The intuition here is that voters may reflect more carefully on expensive projects and thus are more likely to make correct choices for them. Moreover, the probability of generating~$A$ increases linearly in the ``quality'' of~$A$. How realistic this is, of course, is open to debate. On the one hand, this avoids having to assume extremely high probabilities for correctly identifying particularly expensive projects (in the case where the relationship would not be linear). On the other hand, the probability of generating a ballot that is completely wrong (a ballot not including even a single ground-truth project) is zero.
	
	Under $\noiseMmaxNashAppCost$, maximising the likelihood would be
	similar to maximising the cost-approval Nash social welfare of a budget allocation. However, for this intuitive connection to hold, it should be the case that the normalisation factor $\noiseMmaxNashAppCostNORM{\truth}$ is independent of $\truth$. So let us look at it in more detail.
	
	\begin{lemma}
		\label{lem:NormFact_MaxNashCost}
		For the noise model $\noiseMmaxNashAppCost$ to be a well-defined probability distribution, it must be the case that:
		\[\noiseMmaxNashAppCostNORM{\truth} = 2^{|\projSet| - 1} c(\truth).\]
	\end{lemma}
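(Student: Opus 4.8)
The plan is to use the defining requirement of a probability distribution: summing $\prob_{\noiseMmaxNashAppCost}(A \mid \truth, I)$ over all possible ballots $A \subseteq \projSet$ must yield $1$. Since $\prob_{\noiseMmaxNashAppCost}(A \mid \truth, I) = \frac{1}{\noiseMmaxNashAppCostNORM{\truth}} c(A \cap \truth)$ and the normalisation factor does not depend on $A$, this condition immediately pins down the factor as
\[\noiseMmaxNashAppCostNORM{\truth} = \sum_{A \subseteq \projSet} c(A \cap \truth).\]
So the entire task reduces to evaluating this combinatorial sum and showing it equals $2^{|\projSet|-1} c(\truth)$.

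To evaluate the sum, I would expand the cost term using its definition $c(A \cap \truth) = \sum_{p \in A \cap \truth} c(p)$ and then exchange the order of summation, grouping by project rather than by ballot. This gives
\[\sum_{A \subseteq \projSet} c(A \cap \truth) = \sum_{A \subseteq \projSet} \sum_{p \in A \cap \truth} c(p) = \sum_{p \in \truth} c(p) \cdot \bigl| \{A \subseteq \projSet \mid p \in A\} \bigr|.\]
The key counting step is that, for any fixed project $p$, the number of subsets of $\projSet$ containing $p$ is exactly $2^{|\projSet|-1}$, since the membership of each of the remaining $|\projSet|-1$ projects is unconstrained.

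Substituting this count and pulling it out of the sum yields $\sum_{p \in \truth} c(p) \cdot 2^{|\projSet|-1} = 2^{|\projSet|-1} \, c(\truth)$, which is precisely the claimed value. I do not expect any genuine obstacle here: the argument is a routine Fubini-style swap together with a standard subset-counting identity. The only points requiring a little care are ensuring the double sum is rearranged correctly (the inner index $p$ ranges over $A \cap \truth$, so only projects in $\truth$ ever contribute, which is what lets the outer sum run over $\truth$ alone) and noting that this holds regardless of whether $\truth$ is exhaustive or which ground truth we fix, so the normalisation genuinely depends on $\truth$ only through $c(\truth)$. This last observation is exactly what will matter for the subsequent discussion of whether the factor is independent of $\truth$.
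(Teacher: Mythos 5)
Your proof is correct and follows essentially the same route as the paper: both reduce the normalisation condition to the sum $\sum_{A \subseteq \projSet} c(A \cap \truth)$ and evaluate it by the same counting argument, namely that each project $p \in \truth$ contributes $c(p)$ once for each of the $2^{|\projSet|-1}$ ballots containing it. Your write-up merely makes the summation swap explicit where the paper phrases it verbally; there is no substantive difference.
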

	
	\begin{proof}
		Consider an instance $I = \tuple{\projSet, c, b}$, an approval ballot $A \subseteq \projSet$, and a ground truth $\truth \in \allocSet(I)$. For $\noiseMmaxNashAppCost$ to be a probability distribution, it should be the case that:
		\[\sum_{A \subseteq \projSet} \prob_{\noiseMmaxNashAppCost} (A \mid \truth, I) = 1 \quad \Longleftrightarrow \quad \noiseMmaxNashAppCostNORM{\truth} = \sum_{A \subseteq \projSet} c(|A \cap \truth|).\]
		Remember that there are $2^{|\projSet|}$ subsets of projects and that any project $p \in \projSet$ appears in exactly half of them. Each time a project $p \in \truth$ appears in a subset $A \subseteq \projSet$, its contribution to the value of $\noiseMmaxNashAppCostNORM{\truth}$ is exactly $c(p)$. We thus have:
		\[\noiseMmaxNashAppCostNORM{\truth} = \sum_{A \subseteq \projSet} c(|A \cap \truth|) = \sum_{p \in \truth} 2^{|\projSet| - 1} c(p) = 2^{|\projSet| - 1} c(\truth). \qedhere\] 
	\end{proof}
	
	\noindent This result tells us that the normalisation factor of the noise model $\noiseMmaxNashAppCost$ depends on the ground truth, meaning that the value of the likelihood is impacted by the ground truth one is considering when computing the MLE. In particular, we cannot conclude that the Nash cost-approval maximising rule is the MLE for this noise model since not all feasible budget allocations have the same cost.
	
	Are there specific cases for which the normalisation factor is independent of the ground truth? Yes, namely for unit-cost instances, as then all exhaustive allocations have the same cost.
	
	\begin{proposition}
		\label{prop:NashAppMax_MLE_Exhaustive}
		Under the assumption that the ground truth is exhaustive, both $\appNashMaximizing$ and $\costAppNashMaximizing$ are the MLE of the noise model $\noiseMmaxNashAppCost$ for unit-cost instances.
	\end{proposition}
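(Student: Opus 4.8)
The plan is to exploit the assumption that the ground truth is exhaustive together with the unit-cost structure in order to remove the dependence of the normalisation factor on $\truth$, and then to read the likelihood off directly as (the exponential of) the argmax objective. Fix a unit-cost instance $I = \tuple{\projSet, c, b}$ with $c(p) = \ell$ for all $p$ and $b = k \ell$, and restrict attention to $\allocSetEx(I)$, since the ground truth is assumed exhaustive. The first step is to observe that every exhaustive allocation contains exactly $k = b/\ell$ projects, so $c(\truth) = b$ is the \emph{same} value for all $\truth \in \allocSetEx(I)$. By Lemma~\ref{lem:NormFact_MaxNashCost} this yields $\noiseMmaxNashAppCostNORM{\truth} = 2^{|\projSet| - 1} b$, a constant independent of the candidate ground truth, which is precisely the property that the discussion preceding the proposition flagged as missing in the general case.

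With the normalisation factor constant across $\allocSetEx(I)$, the likelihood of a profile $\profile$ with $n$ ballots factorises as $L_{\noiseMmaxNashAppCost}(\profile, \truth, I) = (2^{|\projSet|-1} b)^{-n} \prod_{A \in \profile} c(A \cap \truth)$. Since the leading factor no longer depends on $\truth$, maximising $L$ over $\allocSetEx(I)$ is equivalent to maximising $\prod_{A \in \profile} c(A \cap \truth)$, and hence (as $\log$ is increasing) to maximising $\sum_{A \in \profile} \log c(A \cap \truth)$, which is exactly the objective $f^N_{\mathit{cost}}$ defining $\costAppNashMaximizing$. Using $c(A \cap \truth) = \ell \, |A \cap \truth|$ in the unit-cost setting, this objective equals $n \log \ell + \sum_{A \in \profile} \log |A \cap \truth|$; the additive constant $n \log \ell$ is irrelevant to the argmax, so the same allocations also maximise $\sum_{A \in \profile} \log |A \cap \truth|$, the objective $f^N_{\mathit{app}}$ defining $\appNashMaximizing$. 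This collapse of the two objectives into one (up to additive constants) is exactly what lets us conclude that both rules coincide with the MLE simultaneously.

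The delicate step, and the one I expect to be the main obstacle, is the treatment of ballots whose intersection with $\truth$ is empty: there the factor $c(A \cap \truth)$ vanishes, $\prob_{\noiseMmaxNashAppCost}(A \mid \truth, I) = 0$, and the logarithm is undefined. This is precisely the regime covered by the second branch in the definitions of $f^N_{\mathit{app}}$ and $f^N_{\mathit{cost}}$, and I would handle it by a case split. If some allocation in $\allocSetEx(I)$ gives every ballot a nonempty intersection, then it has strictly positive likelihood, so the MLE discards every allocation that leaves a voter empty (such allocations having likelihood $0$); on the surviving allocations the equivalence of the previous paragraph applies verbatim. If instead no exhaustive allocation satisfies all voters, then every $\truth \in \allocSetEx(I)$ gives $L = 0$ and falls into the special branch, so both the MLE and the two rules return all of $\allocSetEx(I)$. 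The subtle point in the first case is that $\sum_{A \in \profile} \log |A \cap \truth|$ can itself equal $0$ (when $\ell = 1$ and every satisfied voter gets exactly one selected project), so to make the argmax of the rule agree with the MLE one must be careful to treat empty-intersection allocations as strictly dominated rather than as merely tied at value $0$; getting this degenerate convention to line up exactly is where the argument needs the most care, whereas the substantive content lies entirely in the constant-normalisation reduction of the first two steps.
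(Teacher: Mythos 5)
Your proposal is correct and follows essentially the same route as the paper: use Lemma~\ref{lem:NormFact_MaxNashCost} plus the fact that all exhaustive allocations of a unit-cost instance have equal cost to make the normalisation factor constant, reduce likelihood maximisation over $\allocSetEx(I)$ to maximising $\prod_{A \in \profile} c(A \cap \pi)$, and use $c(A\cap\pi)=\ell\,|A\cap\pi|$ to transfer the conclusion to $\appNashMaximizing$ (the paper phrases this last step as ``the two rules coincide on unit-cost instances''). Your scrutiny of the zero-likelihood ballots is actually more careful than the paper, which silently bridges the gap by asserting that $\costAppNashMaximizing$ is exhaustive; note, though, that your second case is slightly off: since the rules take their argmax over all of $\allocSet(I)$, when no feasible allocation satisfies every voter all feasible allocations tie at score $0$ and the rules return $\allocSet(I)$ rather than $\allocSetEx(I)$, so that degenerate case is really a (minor) failure of the literal definitions---one the paper avoids only by implicitly assuming it away---rather than a case of agreement.
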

	
	\begin{proof}
		Let $I$ be a unit-cost instance $I$. Consider any two exhaustive budget allocations $\pi, \pi' \in \allocSetEx(I)$. Since we have $|\pi| = |\pi'| = c(\pi) = c(\pi')$, Lemma~\ref{lem:NormFact_MaxNashCost} entails that $\noiseMmaxNashAppCostNORM{\pi} = \noiseMmaxNashAppCostNORM{\pi'}$. For any profile $\profile$, we have then:
		\begin{align*}
			\argmax_{\pi \in \allocSetEx(I)} L_{\noiseMmaxNashAppCost}(\profile, \pi, I) & = \argmax_{\pi \in \allocSetEx(I)} \prod_{A \in \profile} \frac{c(A \cap \pi)}{\noiseMmaxNashAppCostNORM{\pi}} \\
			& = \argmax_{\pi \in \allocSetEx(I)} \prod_{A \in \profile} c(A \cap \pi) \\
			& = \costAppNashMaximizing(I, \profile).
		\end{align*}
		The last line follows form the fact that $\costAppNashMaximizing$ is exhaustive.
		
		Given that $\appNashMaximizing$ and $\costAppNashMaximizing$ coincide on unit-cost instances, this also applies to $\appNashMaximizing$.
	\end{proof}
	
	\noindent The fact that $\appNashMaximizing$ and $\costAppNashMaximizing$ are MLEs for $\noiseMmaxNashAppCost$ only under some restricted hypothesis is the first hint of a general impossibility result. Indeed, we can show that there are no noise models for which these rule are MLEs.
	
	\begin{theorem}
		\label{thm:NashAppMax_NotMLE_UnitCost}
		There is no noise model $\noiseModel$ such that either $\appNashMaximizing$ or $\costAppNashMaximizing$ is the MLE of $\noiseModel$, not even on unit-cost instances.
	\end{theorem}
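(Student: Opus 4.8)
The plan is to argue by contradiction, exploiting the multiplicative structure of the likelihood rather than weak reinforcement. Since $\appNashMaximizing$ and $\costAppNashMaximizing$ are monotonic argmax rules, Proposition~\ref{prop:ArgmaxRules_WeakReinforcement} tells us they satisfy weak reinforcement, so Lemma~\ref{lem:MLENecessaryCondition} gives no leverage here; I must instead derive constraints on the hypothetical noise model directly. Note also that on the instance I will use, where every project has cost $1$, we have $c(A \cap \pi) = |A \cap \pi|$, so the two rules coincide; it therefore suffices to refute a single one of them.

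First I would fix the smallest nontrivial unit-cost instance: $\projSet = \{p_1, p_2\}$ with unit costs and $b = 2$, so that $\allocSet(I) = \{\emptyset, \{p_1\}, \{p_2\}, \{p_1, p_2\}\}$ and there are only four possible ballots. Suppose toward a contradiction that some noise model $\noiseModel$ makes the rule an MLE. The crux is the behaviour of the rule on the three ``degenerate'' single-ballot profiles $(\{p_1\})$, $(\{p_2\})$ and $(\emptyset)$: in each case every feasible allocation receives score $0$, either through the ``otherwise'' clause or because every nonempty intersection has size exactly $1$ and $\log 1 = 0$. Hence on each of these profiles the rule returns all four allocations, which forces the MLE to assign them equal likelihood. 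Reading off the single factor, this means $\prob_\noiseModel(A \mid \pi, I)$ is independent of the ground truth $\pi$ for each $A \in \{\emptyset, \{p_1\}, \{p_2\}\}$.

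The key step is then to combine this with normalisation. For every ground truth $\pi$, the four values $\prob_\noiseModel(\cdot \mid \pi, I)$ sum to $1$; since three of them do not depend on $\pi$, neither does the fourth, $\prob_\noiseModel(\{p_1, p_2\} \mid \pi, I)$. But the remaining single-ballot profile $(\{p_1, p_2\})$ is non-degenerate: the allocation $\{p_1, p_2\}$ scores $\log 2 > 0$ while all others score $0$, so the rule returns $\{p_1, p_2\}$ uniquely. For the MLE to agree we would need $\prob_\noiseModel(\{p_1, p_2\} \mid \{p_1, p_2\}, I) > \prob_\noiseModel(\{p_1, p_2\} \mid \emptyset, I)$, contradicting the $\pi$-independence just established. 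This contradiction completes the argument, and it applies verbatim to both rules.

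I expect the main obstacle to be resisting a tempting but flawed shortcut. The natural first instinct is a ``neutralising ballot'' argument: append to a profile a ballot disjoint from a target allocation $\pi$ so that the ``otherwise'' clause crashes its score to $0$, and then conclude that the additive log-likelihood cannot reproduce this collapse. This fails because the adversary is free to assign that disjoint ballot probability $0$ under $\noiseModel$, in which case the offending factor is $0$ for every allocation and the likelihoods collapse together too, matching the rule. The normalisation route above sidesteps this pitfall entirely: rather than cancelling a shared positive factor across allocations, it uses the global constraint that each ground truth induces a probability distribution summing to one, which no choice of zero probabilities can evade.
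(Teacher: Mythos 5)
Your overall architecture is the same as the paper's: the same two-project unit-cost instance with $b = 2$, constraints on the hypothetical noise model read off from the four single-ballot profiles, and the normalisation of $\noiseModel$ as the engine of the contradiction. The gap is the step where you claim that on $(\{p_1\})$ and $(\{p_2\})$ ``the rule returns all four allocations''. That claim is true only under a literal reading of the paper's score function, in which the ``otherwise'' branch returns $0$ and therefore collides with $\log 1 = 0$. This collision is a defect in how the definition is written, not a feature of the rule: $\appNashMaximizing$ is meant to maximise the Nash product $\prod_{A \in \profile} |A \cap \pi|$, so an agent with an empty intersection drives the welfare to $0$, strictly below a product of $1$ (equivalently, the ``otherwise'' branch should be $-\infty$, the log of a zero product). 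That this is the intended semantics is unambiguous from the paper's other claims --- that these rules are exhaustive whenever every project is approved by someone, and, inside the proof of this very theorem, that $\appNashMaximizing(I, (\{p_1\})) = \{\{p_1\}, \{p_1, p_2\}\}$ --- all of which are false under your reading. Under the intended semantics, the profile $(\{p_1\})$ only yields the equality $\prob_\noiseModel(\{p_1\} \mid \{p_1\}, I) = \prob_\noiseModel(\{p_1\} \mid \{p_1, p_2\}, I)$ together with strict inequalities against the losers $\emptyset$ and $\{p_2\}$; you no longer get $\pi$-independence for three of the four ballots, so normalisation no longer pins down $\prob_\noiseModel(\{p_1, p_2\} \mid \pi, I)$, and your contradiction evaporates.

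The repair uses exactly your ingredients but keeps track of which allocations actually win. Writing $\prob_A^\pi$ for $\prob_\noiseModel(A \mid \pi, I)$ and omitting braces around sets: from $(\emptyset)$ all allocations tie, so $\prob_\emptyset^{p_1} = \prob_\emptyset^{p_1, p_2}$; from $(\{p_1\})$ the winners are $\{p_1\}$ and $\{p_1, p_2\}$, so $\prob_{p_1}^{p_1} = \prob_{p_1}^{p_1, p_2}$; from $(\{p_2\})$ the allocation $\{p_1\}$ is now a loser, so $\prob_{p_2}^{p_1, p_2} > \prob_{p_2}^{p_1}$; and from $(\{p_1, p_2\})$ the unique winner is $\{p_1, p_2\}$, so $\prob_{p_1, p_2}^{p_1, p_2} > \prob_{p_1, p_2}^{p_1}$. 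Subtracting the normalisation identities $\sum_{A \subseteq \projSet} \prob_A^{p_1} = 1$ and $\sum_{A \subseteq \projSet} \prob_A^{p_1, p_2} = 1$ and cancelling the two equalities leaves $(\prob_{p_2}^{p_1} - \prob_{p_2}^{p_1, p_2}) + (\prob_{p_1, p_2}^{p_1} - \prob_{p_1, p_2}^{p_1, p_2}) = 0$, whose left-hand side is strictly negative by the two strict inequalities --- the desired contradiction. This is precisely the paper's proof; your version differs only in the tie claims, which are incorrect for the rule the theorem is actually about, so the fix is local.
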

	
	\begin{proof}
		Consider an instance $I$ with two projects $p_1$ and $p_2$ of cost 1, and a budget limit of $b = 2$. Let $\noiseModel$ be a generic noise model, and denote by $\prob_A^\pi$ the value of $\prob_\noiseModel(A \mid \pi, I)$ for any $A$ and $\pi$. To simplify notation, we omit braces around sets. 
		
		For the noise model $\noiseModel$ to be a well-defined probability distribution, the following two equalities should be satisfied:
		\begin{align}
			\sum_{A \subseteq \projSet} \prob_{A}^{p_1} = 1 &&\enspace\Leftrightarrow\enspace&& \prob_{\emptyset}^{~p_1} &&+&& \prob_{p_1}^{~p_1} &&+&& \prob_{p_2}^{~p_1} &&+&& \prob_{p_1, p_2}^{~p_1} &&=&& 1, \label{eq:proofNotMLEMaxAppLine1} \\
			\sum_{A \subseteq \projSet} \prob_{A}^{p_1, p_2} = 1 &&\enspace\Leftrightarrow\enspace&& \prob_{\emptyset}^{~p_1, p_2} &&+&& \prob_{p_1}^{~p_1, p_2} &&+&& \prob_{p_2}^{~p_1, p_2} &&+&& \prob_{p_1, p_2}^{~p_1, p_2} &&=&& 1. \label{eq:proofNotMLEMaxAppLine2}
		\end{align}
		Now, on the single-agent profile $\profile = (\emptyset)$, $\appNashMaximizing$ returns $\allocSet(I)$. So for $\appNashMaximizing$ to be the MLE of $\noiseModel$, we must have $\prob_{\emptyset}^{~p_1} = \prob_{\emptyset}^{~p_1, p_2}$. Moreover, on $\profile = (\{p_1\})$, we have $\appNashMaximizing(I, \profile) = \{\{p_1\}, \{p_1, p_2\}\}$, so $\prob_{p_1}^{~p_1} = \prob_{p_1}^{~p_1, p_2}$. Using these two equalities and by subtracting \eqref{eq:proofNotMLEMaxAppLine2} from \eqref{eq:proofNotMLEMaxAppLine1}, we get:
		\begin{equation}
			(\prob_{p_2}^{~p_1} - \prob_{p_2}^{~p_1, p_2}) \quad + \quad (\prob_{p_1, p_2}^{~p_1} - \prob_{p_1, p_2}^{~p_1, p_2}) \quad = \quad 0. \label{eq:proofNotMLEMaxAppLine3}
		\end{equation}
		Now, since $\appNashMaximizing(I, (\{p_2\})) = \{\{p_2\}, \{p_1, p_2\}\}$, we must have $\prob_{p_2}^{~p_1, p_2} > \prob_{p_2}^{~p_1}$. For $\profile = (\{p_1, p_2\})$, we have $\appNashMaximizing(I, \profile) = \{\{p_1, p_2\}\}$. We can then derive $\prob_{p_1, p_2}^{~p_1, p_2} > \prob_{p_1, p_2}^{~p_1}$. These two last inequalities contradict \eqref{eq:proofNotMLEMaxAppLine3}. It is then impossible for $\appNashMaximizing$ to be the MLE of $\noiseModel$ on $I$. From the unit-cost assumption, it is clear that this also applies to $\costAppNashMaximizing$.
	\end{proof}
	
	\noindent This impossibility result concludes this part of our analysis. We will now consider ``normalised'' satisfaction functions.
	
	\subsubsection{Normalised Satisfaction}
	\label{subsubsec:Normalised_Satisfaction}
	
	In the hope of overcoming Theorem~\ref{thm:NashAppMax_NotMLE_UnitCost}, we also consider normalised variants of the cardinality and cost satisfaction functions. In these variants, the satisfaction of an agent is expressed in terms of the proportion of the outcome that satisfies them. The rules induced by these normalised satisfaction functions are denoted by $\relAppNashMaximizing$ and $\relCostAppNashMaximizing$, and they are the argmax rules defined in terms of the following functions:
	\begin{align*}
		\tilde{f}^N_{\mathit{app}}(I, \profile, \pi) & = 
		\begin{cases}
			\sum_{A \in \profile} \log(|A \cap \pi|) - \log(|\pi|) & \textit{if for all } A' \in \profile, |A' \cap \pi| \neq 0, \\
			0 & \textit{otherwise.}
		\end{cases} \\
		\tilde{f}^N_{\mathit{cost}}(I, \profile, \pi) & = 
		\begin{cases}
			\sum_{A \in \profile} \log(c(A \cap \pi)) - \log(c(\pi)) & \textit{if for all } A' \in \profile, c(A' \cap \pi) \neq 0, \\
			0 & \textit{otherwise.}
		\end{cases}.
	\end{align*}
	
	\noindent These rules are inspired by the concept of \emph{relative satisfaction} introduced by \citet{LMR21}. However, while in the cited work the authors normalised the satisfaction with respect to the ballot, we define it here with respect to the budget allocation.
	
	It is worth noting that the rules $\relAppNashMaximizing$ and $\relCostAppNashMaximizing$ can lead to extreme behaviours. For example, consider an instance with budget limit $b$, that we assume to be even, and a set of projects $\projSet = \{p^\star\} \cup \{p_1, \ldots, p_b\}$ of arbitrary cost lower than $b$. Consider the two-agent profile $\profile$ such that:
	\begin{align*}
		A_1 & = \{p^\star\} \cup \{p_1, p_3, \ldots, p_{b - 1}\} & A_2 & = \{p^\star\} \cup \{p_2, p_4, \ldots, p_b\}.
	\end{align*}
	Then, according to both rules, selecting just $p^\star$ is better than anything else. Even if this can seem extreme, these rules can still be justified when considering voters who would rather save public money than use it on projects they do not approve. This would correspond to associating a strong rejection, rather than indifference, with the action of not approving a project. 
	
	Note that this example also implies that the rules are not exhaustive, even on unit-cost instances.
	
	Let us first investigate the rule $\relCostAppNashMaximizing$. We will continue using the noise model $\noiseMmaxNashAppCost$ introduced earlier.
	
	Recall the expression we found for the normalisation factor $\noiseMmaxNashAppCostNORM{\truth}$ in Lemma~\ref{lem:NormFact_MaxNashCost}. Plugging it into the definition of $\noiseMmaxNashAppCost$, we obtain the following expression for any instance $I$, approval ballot $A$, and ground truth $\truth$:
	\[\prob_{\noiseMmaxNashAppCost} (A \mid \truth, I) = \frac{1}{2^{|\projSet| - 1}} \frac{c(A \cap \truth)}{c(\truth)}.\]
	
	\noindent From this, it should be clear that $\relCostAppNashMaximizing$ is the MLE of $\noiseMmaxNashAppCost$.
	
	\begin{theorem}
		\label{thm:NashNormApp_MLE}
		The rule $\relCostAppNashMaximizing$ is the MLE of the noise model $\noiseMmaxNashAppCost$.
	\end{theorem}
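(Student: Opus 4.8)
The plan is to verify the MLE definition directly by showing that, for every instance $I = \tuple{\projSet, c, b}$ and every profile $\profile$, the likelihood $L_{\noiseMmaxNashAppCost}(\profile, \pi, I)$ and the objective $\tilde{f}^N_{\mathit{cost}}(I, \profile, \pi)$ defining $\relCostAppNashMaximizing$ order all candidate budget allocations $\pi$ in the same way, so that their argmaxima coincide. First I would substitute the closed form of the normalisation factor from Lemma~\ref{lem:NormFact_MaxNashCost}, which for any $\pi$ with $c(\pi) > 0$ gives $\prob_{\noiseMmaxNashAppCost}(A \mid \pi, I) = \frac{1}{2^{|\projSet|-1}} \cdot \frac{c(A \cap \pi)}{c(\pi)}$. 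Taking the product over the ballots in $\profile$ yields
\[
L_{\noiseMmaxNashAppCost}(\profile, \pi, I) = \frac{1}{\left(2^{|\projSet| - 1}\right)^{|\profile|}} \prod_{A \in \profile} \frac{c(A \cap \pi)}{c(\pi)},
\]
and since the leading factor is a strictly positive constant that does not depend on $\pi$, maximising $L_{\noiseMmaxNashAppCost}$ over $\allocSet(I)$ is equivalent to maximising $\prod_{A \in \profile} c(A \cap \pi)/c(\pi)$.

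Next I would restrict attention to the allocations that support every voter, i.e.\ those $\pi$ with $c(A \cap \pi) > 0$ for all $A \in \profile$ (which in particular forces $c(\pi) > 0$). On this set the product above is strictly positive, so applying the logarithm---a strictly increasing, hence argmax-preserving, map---gives
\[
\log \prod_{A \in \profile} \frac{c(A \cap \pi)}{c(\pi)} = \sum_{A \in \profile} \bigl(\log c(A \cap \pi) - \log c(\pi)\bigr) = \tilde{f}^N_{\mathit{cost}}(I, \profile, \pi).
\]
Thus, over the supported allocations, the MLE objective differs from $\tilde{f}^N_{\mathit{cost}}$ only by the additive constant $-|\profile| \log 2^{|\projSet|-1}$, so the two rules have the same maximisers there.

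The delicate step---and the main obstacle---is reconciling the boundary case covered by the ``otherwise'' branch of $\tilde{f}^N_{\mathit{cost}}$, namely allocations $\pi$ for which some ballot $A \in \profile$ satisfies $c(A \cap \pi) = 0$ (including $\pi = \emptyset$, where the noise model is undefined and which must be excluded as a candidate ground truth). For such $\pi$ the product $\prod_{A \in \profile} c(A \cap \pi)/c(\pi)$ vanishes, so the likelihood $L_{\noiseMmaxNashAppCost}(\profile, \pi, I)$ equals $0$, its least possible value; these are exactly the allocations that the ``otherwise'' branch marks as non-competitive (the Nash welfare of their relative satisfactions being zero). Hence they are discarded by the MLE and by $\relCostAppNashMaximizing$ simultaneously whenever at least one supported allocation exists, in which case both rules return their common maximiser among supported allocations. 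In the degenerate case where every feasible allocation leaves some voter unsatisfied, both objectives are constant and both rules are indifferent over all candidates. In either regime $\argmax_{\pi \in \allocSet(I)} L_{\noiseMmaxNashAppCost}(\profile, \pi, I) = \relCostAppNashMaximizing(I, \profile)$, which is precisely the claim.
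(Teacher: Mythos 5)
Your proposal is correct and follows essentially the same route as the paper's proof: substitute the normalisation factor from Lemma~\ref{lem:NormFact_MaxNashCost}, pull out the constant factor $\left(\nicefrac{1}{2^{|\projSet|-1}}\right)^{|\profile|}$, and identify the remaining product $\prod_{A \in \profile} c(A \cap \pi)/c(\pi)$ with the objective of $\relCostAppNashMaximizing$. What you add is an explicit treatment of the degenerate cases---allocations with $c(A \cap \pi) = 0$ for some ballot, the allocation $\pi = \emptyset$ (where the noise model is not even defined), and profiles admitting no supported allocation---which the paper's proof absorbs silently into its final argmax identity. One caveat, though: your claim that the ``otherwise'' branch of $\tilde{f}^N_{\mathit{cost}}$ marks unsupported allocations as non-competitive holds only under the product (Nash-welfare) reading of the rule. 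As literally written, that branch assigns the score $0$, whereas every \emph{supported} allocation scores $\sum_{A \in \profile} \left(\log c(A \cap \pi) - \log c(\pi)\right) \leq 0$, since $c(A \cap \pi) \leq c(\pi)$; so under the literal piecewise definition the unsupported allocations would weakly dominate all supported ones, and both your boundary step and the theorem itself would fail (e.g., $\emptyset$ would always be among the winners). The defect lies in the paper's definition---the ``otherwise'' value should be $-\infty$ in log space, i.e., Nash welfare $0$---not in your argument, and the paper's own proof implicitly adopts the same charitable product reading that you do; but your reconciliation of the boundary case does rest on that reading rather than on the definition as stated.
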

	
	\begin{proof}
		Let $I = \tuple{\projSet, c, b}$ be an instance. The likelihood of a profile $\profile$ and a budget allocation $\pi \in \allocSet(I)$ under the noise model $\noiseMmaxNashAppCost$ is:
		\[L_{\noiseMmaxNashAppCost}(\profile, \pi, I) = \prod_{A \in \profile} \frac{1}{2^{|\projSet| - 1}} \frac{c(A \cap \pi)}{c(\pi)} = \left(\frac{1}{2^{|\projSet| - 1}}\right)^{|\profile|}\prod_{A \in \profile} \frac{c(A \cap \pi)}{c(\pi)}.\]
		Since the first multiplicative factor in the above expression is constant over all budget allocations, we have:
		\[\argmax_{\pi \in \allocSet(I)} L_{\noiseMmaxNashAppCost}(\profile, \pi, I) = \argmax_{\pi \in \allocSet(I)} \prod_{A \in \profile}\frac{c(A \cap \pi)}{c(\pi)} = \relCostAppNashMaximizing(I, \profile).\]
		Thus, maximising the likelihood under $\noiseMmaxNashAppCost$ is the same as maximising the social welfare in the sense of $\relCostAppNashMaximizing$.
	\end{proof}
	
	\noindent We have finally been able to find a PB rule that can be interpreted as an MLE. In the following we will show a similar result for $\relAppNashMaximizing$. For this rule we introduce a new noise model denoted by $\noiseMmaxNashApp$. It is such that for any instance $I$, approval ballot $A$, and ground truth $\truth$, we have:
	\[\prob_{\noiseMmaxNashApp} (A \mid \truth, I) = \frac{1}{\noiseMmaxNashAppNORM{\truth}} |A \cap \truth|,\]
	where $\noiseMmaxNashAppNORM{\truth}$ is a suitable normalisation factor.
	
	Using a similar proof technique as above, we show that $\relAppNashMaximizing$ is the MLE of $\noiseMmaxNashApp$.
	
	\begin{theorem}
		\label{thm:NashNormCost_MLE}
		The rule $\relAppNashMaximizing$ is the MLE of the noise model $\noiseMmaxNashApp$.
	\end{theorem}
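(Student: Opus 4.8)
The plan is to mirror the proof of Theorem~\ref{thm:NashNormApp_MLE}, the only genuinely new ingredient being the computation of the normalisation factor $\noiseMmaxNashAppNORM{\truth}$ for the cardinality-based noise model. First I would establish the analogue of Lemma~\ref{lem:NormFact_MaxNashCost}: requiring $\sum_{A \subseteq \projSet} \prob_{\noiseMmaxNashApp}(A \mid \truth, I) = 1$ forces $\noiseMmaxNashAppNORM{\truth} = \sum_{A \subseteq \projSet} |A \cap \truth|$. By the same double-counting argument as in Lemma~\ref{lem:NormFact_MaxNashCost}---every project appears in exactly $2^{|\projSet| - 1}$ of the $2^{|\projSet|}$ subsets of $\projSet$, and each occurrence of a project $p \in \truth$ contributes $1$ to $|A \cap \truth|$---this evaluates to
\[\noiseMmaxNashAppNORM{\truth} = \sum_{p \in \truth} 2^{|\projSet| - 1} = 2^{|\projSet| - 1}\,|\truth|.\]

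With the normalisation factor in hand, I would substitute it back to obtain the closed form $\prob_{\noiseMmaxNashApp}(A \mid \truth, I) = \frac{1}{2^{|\projSet| - 1}}\,\frac{|A \cap \truth|}{|\truth|}$, exactly paralleling the expression derived for $\noiseMmaxNashAppCost$. The crucial point---and the reason the normalised rule succeeds where $\appNashMaximizing$ failed in Theorem~\ref{thm:NashAppMax_NotMLE_UnitCost}---is that the $\truth$-dependence of the normaliser does not cancel but instead reappears as the factor $1/|\truth|$, which is precisely the per-allocation normalisation built into $\relAppNashMaximizing$. Writing the likelihood of a profile $\profile$ and a candidate allocation $\pi$ then gives
\[L_{\noiseMmaxNashApp}(\profile, \pi, I) = \left(\frac{1}{2^{|\projSet| - 1}}\right)^{|\profile|} \prod_{A \in \profile} \frac{|A \cap \pi|}{|\pi|},\]
and since the leading factor is constant across all $\pi$, the argmax over $\allocSet(I)$ reduces to the argmax of $\prod_{A \in \profile} |A \cap \pi| / |\pi|$, which is by definition $\relAppNashMaximizing(I, \profile)$.

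The one point that needs care is the boundary behaviour encoded by the ``otherwise'' clause in the definition of $\relAppNashMaximizing$. Whenever some ballot $A \in \profile$ satisfies $|A \cap \pi| = 0$ (in particular when $\pi = \emptyset$), the product above contains a zero factor and the likelihood vanishes; such allocations are never selected by the MLE as soon as some allocation with all-positive intersections exists, and this is exactly how the product-based objective of $\relAppNashMaximizing$ treats them. I expect this alignment of the degenerate cases---rather than any hard computation---to be the only subtle step; the normalisation count and the factoring of the likelihood are routine. If instead every feasible allocation leaves some ballot with an empty intersection, then both the likelihood and the rule's objective are identically zero, so both return all of $\allocSet(I)$ and agreement is again immediate.
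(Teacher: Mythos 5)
Your proposal is correct and follows essentially the same route as the paper's proof: the paper likewise derives $\noiseMmaxNashAppNORM{\truth} = 2^{|\projSet|-1}\,|\truth|$ from the normalisation requirement (inline, rather than as a separate lemma), substitutes it into the likelihood, factors out the constant $\left(1/2^{|\projSet|-1}\right)^{|\profile|}$, and identifies the remaining argmax of $\prod_{A \in \profile} |A \cap \pi|/|\pi|$ with $\relAppNashMaximizing(I, \profile)$. Your explicit discussion of the zero-intersection boundary cases is extra care that the paper's proof omits (it silently identifies the rule with maximisation of that product), so it only strengthens the argument.
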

	
	\begin{proof}
		Let us first compute the exact value of the normalisation factor $\noiseMmaxNashAppNORM{\truth}$. For the noise model $\noiseMmaxNashApp$ to be a well-defined probability distribution, the following must hold:
		\[\sum_{A \subseteq \projSet} \prob_{\noiseMmaxNashApp} (A \mid \truth, I) = 1 \enspace \Leftrightarrow \enspace \noiseMmaxNashAppNORM{\truth} = \sum_{A \subseteq \projSet} |A \cap \truth| \enspace \Leftrightarrow \enspace \noiseMmaxNashAppNORM{\truth} = 2^{|\projSet| - 1} |\truth|.\]
		Hence, given a profile $\profile$, we have:
		\begin{align*}
			\argmax_{\pi \in \allocSet(I)} L_{\noiseMmaxNashApp}(\profile, \pi) & = \argmax_{\pi \in \allocSet(I)} \prod_{A \in \profile} \frac{1}{2^{|\projSet| - 1}} \frac{|A \cap \pi|}{|\pi|} \\
			& = \argmax_{\pi \in \allocSet(I)} \prod_{A \in \profile} \frac{|A \cap \pi|}{|\pi|} \\
			& = \relAppNashMaximizing(I, \profile).
		\end{align*}
		This immediately implies that $\relAppNashMaximizing$ is the MLE of $\noiseMmaxNashApp$.
	\end{proof}
	
	\noindent This concludes our study of the Nash social welfare. We now turn to the more standard measure of the social welfare, the utilitarian social welfare.
	
	\subsection{Utilitarian Social Welfare}
	\label{subsec:Utilitarian_Social_Welfare}
	
	Let us now turn to the analysis of monotonic argmax rules defined in terms of utilitarian social welfare. As we did before, we will consider different satisfaction functions.
	
	\subsubsection{Cardinality and Cost Satisfaction}
	
	Following \citet{TaFa19}, we define the approval maximising rule $\appMaximizing$ and the cost-approval maximising rule $\costAppMaximizing$ as the argmax rules determined by $f_{\mathit{app}}$ and $f_{\mathit{cost}}$, respectively:
	\begin{align*}
		f_{\mathit{app}}(I, \profile, \pi) & = \sum_{A \in \profile} |A \cap \pi|, \\
		f_{\mathit{cost}}(I, \profile, \pi) & = \sum_{A \in \profile} c(A \cap \pi).
	\end{align*}
	
	\noindent It should be clear that these two rules are monotonic argmax rule (since the score they use is additive) and thus both satisfy weak reinforcement. Note also that these two rules are exhaustive.
	
	Following an idea developed by \citet{CoSa05} for scoring rules (in the standard voting framework), we introduce the noise model $\noiseMmaxapp$. It is defined such that for any instance $I = \tuple{\projSet, c, b}$, approval ballot $A \subseteq \projSet$, and ground truth $\truth \in \allocSet(I)$:
	\[\prob_{\noiseMmaxapp}(A \mid \truth, I) = \frac{1}{\noiseMmaxappNORM{\truth}} \prod_{p \in \projSet}2^{\mathds{1}_{p \in A \cap \truth}} = \frac{1}{\noiseMmaxappNORM{\truth}}2^{|A \cap \truth|},\]
	where $\noiseMmaxappNORM{\truth}$ is a suitable normalisation factor.
	
	$\noiseMmaxapp$ is a particularly simple manifestation of what we would expect to see in a noise model: any possible ballot might be generated in principle, but the probability of generating ballot~$A$ increases exponentially with the size of the intersection between $A$ and the ground truth.
	
	With this noise model, maximising the likelihood may appear to have the same effect as maximising the approval score of a budget allocation. It could then be the case that the approval maximising rule is the MLE of $\noiseMmaxapp$. However, for this to hold, one has to have a closer look at the normalisation factor.
	
	\begin{lemma}\label{lem:Zapp}
		For the noise model $\noiseMmaxapp$ to be a well-defined probability distribution, it must be the case that:
		\[\noiseMmaxappNORM{\truth} = 2^{|\projSet|} \left(\frac{3}{2}\right)^{|\truth|}.\]
	\end{lemma}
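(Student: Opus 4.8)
The strategy is to impose the normalisation condition $\sum_{A \subseteq \projSet} \prob_{\noiseMmaxapp}(A \mid \truth, I) = 1$ and solve for $\noiseMmaxappNORM{\truth}$. Since $\prob_{\noiseMmaxapp}(A \mid \truth, I) = \frac{1}{\noiseMmaxappNORM{\truth}} 2^{|A \cap \truth|}$, the condition is equivalent to $\noiseMmaxappNORM{\truth} = \sum_{A \subseteq \projSet} 2^{|A \cap \truth|}$. So the entire content of the lemma is the combinatorial evaluation of this sum, and the plan is to compute it by factorising over projects.

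\medskip

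\noindent \textbf{Key steps.} First I would write $2^{|A \cap \truth|} = \prod_{p \in \projSet} 2^{\mathds{1}_{p \in A \cap \truth}}$, exactly as the definition of $\noiseMmaxapp$ already suggests. Summing over all $A \subseteq \projSet$ then factorises into an independent choice for each project $p$ — either $p \in A$ or $p \notin A$ — so that
\[
\sum_{A \subseteq \projSet} 2^{|A \cap \truth|} = \prod_{p \in \projSet} \left( \sum_{b \in \{0,1\}} 2^{\,b \cdot \mathds{1}_{p \in \truth}} \right).
\]
Each factor evaluates to $1 + 2 = 3$ when $p \in \truth$ (the exponent is $1$ when $p$ is included) and to $1 + 1 = 2$ when $p \notin \truth$ (the exponent is always $0$). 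Counting the two kinds of factors gives $3^{|\truth|} \cdot 2^{|\projSet| - |\truth|}$, which I would then rewrite as $2^{|\projSet|} (3/2)^{|\truth|}$ to match the stated form. Equating this with $\noiseMmaxappNORM{\truth}$ completes the proof.

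\medskip

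\noindent \textbf{Main obstacle.} There is no real obstacle here: once the sum is factorised over projects, the computation is routine, and the only point requiring a moment's care is correctly identifying which factor contributes $3$ and which contributes $2$ (i.e.\ keeping track of the indicator $\mathds{1}_{p \in \truth}$ in the exponent). The analogy with the previous normalisation computations (Lemma~\ref{lem:NormFact_MaxNashCost}, and the forthcoming $\noiseMmaxNashAppNORM{\truth}$ calculation) is instructive, but the key difference is that the \emph{exponential} weight $2^{|A \cap \truth|}$ makes the sum factorise into a product rather than reducing to a simple "each project appears in half the subsets" count; this is why the answer carries a base of $3/2$ rather than a clean power of $2$. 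The significance, as the surrounding narrative foreshadows, is that $\noiseMmaxappNORM{\truth}$ depends on $\truth$ through $|\truth|$, so the likelihood will not reduce cleanly to the additive approval score — setting up the impossibility result that $\appMaximizing$ is not an MLE in general.
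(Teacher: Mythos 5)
Your proof is correct, but it takes a different route from the paper's. You evaluate $\sum_{A \subseteq \projSet} 2^{|A \cap \truth|}$ by exploiting the product form $2^{|A \cap \truth|} = \prod_{p \in \projSet} 2^{\mathds{1}_{p \in A \cap \truth}}$ and factorising the sum over subsets into an independent two-way choice per project, each factor contributing $3$ (if $p \in \truth$) or $2$ (if $p \notin \truth$), giving $3^{|\truth|} \cdot 2^{|\projSet|-|\truth|}$ directly. The paper instead groups the subsets $A$ by the size $k = |A \cap \truth|$ of their intersection with the ground truth, writes the sum as $\sum_{k} 2^k \binom{|\truth|}{k} \sum_{j} \binom{|\projSet|-|\truth|}{j}$, and applies the binomial expansion twice to reach the same closed form. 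Both arguments are elementary and short; yours is arguably the more direct one, as it mirrors the per-project independence already visible in the definition of $\noiseMmaxapp$ and would generalise immediately to any base $q$ in place of $2$ (yielding $(q+1)^{|\truth|} 2^{|\projSet|-|\truth|}$) or even to project-dependent weights, whereas the paper's grouping-by-intersection-size argument makes explicit the induced distribution of $|A \cap \truth|$ and stays stylistically closer to the counting argument used for the other normalisation factor in Lemma~\ref{lem:NormFact_MaxNashCost}. Your concluding remark about why the dependence on $|\truth|$ matters downstream is also accurate.
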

	
	\begin{proof}
		Consider any instance $I = \tuple{\projSet, c, b}$. Let $A \subseteq \projSet$ be an approval ballot and $\truth \in \allocSet(I)$ a ground truth. For $\noiseMmaxapp$ to be a probability distribution, it should be the case that:
		\[\sum_{A \subseteq \projSet} \prob_{\noiseMmaxapp}(A \mid \truth, I) = 1 \qquad \Longleftrightarrow \qquad \noiseMmaxappNORM{\truth} = \sum_{A \subseteq \projSet} 2^{|A \cap \truth|}.\]
		Let's do some combinatorics. For $k \in \{0, \ldots, |\truth|\}$, how many subsets of $\projSet$ will intersect with $\truth$ on exactly $k$ projects? A suitable subset will consist of $k$ projects from $\truth$ that make up the intersection and any number $j \in \{0, \ldots, |\projSet| - |\truth|\}$ of projects from $\projSet \setminus \truth$ that do not have any impact on the intersection. Each such subset of projects contributes $2^k$ to the value of $\noiseMmaxappNORM{\truth}$. We thus have:
		\allowdisplaybreaks
		\begin{align*}
			\noiseMmaxappNORM{\truth} & = \sum_{k = 0}^{|\truth|} 2^k \sum_{j = 0}^{|\projSet| - |\truth|} \binom{|\truth|}{k} \binom{|\projSet| - |\truth|}{j} \\
			& = \sum_{k = 0}^{|\truth|} \binom{|\truth|}{k} 2^k \sum_{j = 0}^{|\projSet| - |\truth|} \binom{|\projSet| - |\truth|}{j} \\
			& = 2^{|\projSet| - |\truth|} \sum_{k = 0}^{|\truth|} \binom{|\truth|}{k} 2^k \\
			& = 2^{|\projSet|} \left(\frac{3}{2}\right)^{|\truth|},
		\end{align*}
		where the last two lines are derived from the binomial expansion.
	\end{proof}
	
	\noindent The normalisation factor of $\noiseMmaxapp$ thus depends on the ground truth, since not all feasible budget allocations have the same cardinality. We thus cannot conclude that the approval maximising rule is the MLE of this noise model.
	
	Interestingly, this is not the case on unit-cost instances when considering exhaustive budget allocations.
	
	\begin{proposition}
		\label{prop:appMax_MLE_Exhaustive}
		Under the assumption that the ground truth is exhaustive, both $\appMaximizing$ and $\costAppMaximizing$ are the MLE of the noise model $\noiseMmaxapp$ for unit-cost instances.
	\end{proposition}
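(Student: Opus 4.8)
The plan is to mirror the argument used for Proposition~\ref{prop:NashAppMax_MLE_Exhaustive}, exploiting the fact that on a unit-cost instance the normalisation factor of $\noiseMmaxapp$ becomes independent of the (exhaustive) ground truth. First I would fix a unit-cost instance $I = \tuple{\projSet, c, b}$ with common cost $\ell$, so that every exhaustive budget allocation $\pi \in \allocSetEx(I)$ satisfies $|\pi| = b/\ell =: k$. By Lemma~\ref{lem:Zapp}, $\noiseMmaxappNORM{\pi} = 2^{|\projSet|}(3/2)^{|\pi|} = 2^{|\projSet|}(3/2)^k$ depends only on $k$, and hence takes the same value for all $\pi \in \allocSetEx(I)$.

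Next I would expand the likelihood of a profile $\profile$ and an exhaustive allocation $\pi$:
\[L_{\noiseMmaxapp}(\profile, \pi, I) = \prod_{A \in \profile} \frac{1}{\noiseMmaxappNORM{\pi}} 2^{|A \cap \pi|} = \left(\frac{1}{\noiseMmaxappNORM{\pi}}\right)^{|\profile|} 2^{\sum_{A \in \profile} |A \cap \pi|}.\]
Because the ground truth is assumed exhaustive, the relevant $\argmax$ ranges over $\allocSetEx(I)$, on which the first factor is constant by the observation above. Since $x \mapsto 2^x$ is strictly increasing, maximising the likelihood over $\allocSetEx(I)$ is therefore equivalent to maximising $\sum_{A \in \profile} |A \cap \pi| = f_{\mathit{app}}(I, \profile, \pi)$. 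As $\appMaximizing$ is exhaustive, its output already lies in $\allocSetEx(I)$, so this $\argmax$ coincides exactly with $\appMaximizing(I, \profile)$, establishing the claim for $\appMaximizing$.

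Finally, for $\costAppMaximizing$ I would note that on a unit-cost instance $c(A \cap \pi) = \ell \cdot |A \cap \pi|$, so $f_{\mathit{cost}} = \ell \cdot f_{\mathit{app}}$ and the two rules select identical winners; the result therefore transfers immediately. I do not expect a genuine obstacle here: the entire argument rests on the cardinality-invariance of $\noiseMmaxappNORM{\cdot}$ over exhaustive allocations, which is handed to us by Lemma~\ref{lem:Zapp}. The only point requiring care is to keep the optimisation restricted to $\allocSetEx(I)$---this is exactly what the ``exhaustive ground truth'' hypothesis licenses, and it is essential, since over all of $\allocSet(I)$ the normalisation factor varies with $|\pi|$ and the equivalence with $f_{\mathit{app}}$ breaks down.
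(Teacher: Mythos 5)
Your proposal is correct and follows essentially the same route as the paper's own proof: invoke Lemma~\ref{lem:Zapp} to see that $\noiseMmaxappNORM{\pi}$ is constant over $\allocSetEx(I)$ on unit-cost instances, factor it out of the likelihood, reduce the maximisation to $f_{\mathit{app}}$ via monotonicity of $x \mapsto 2^x$ and the exhaustiveness of $\appMaximizing$, and transfer to $\costAppMaximizing$ by the coincidence of the two rules on unit-cost instances. Your explicit remarks (that $|\pi| = b/\ell$ for exhaustive allocations and that $f_{\mathit{cost}} = \ell \cdot f_{\mathit{app}}$) are details the paper leaves implicit, but the argument is the same.
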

	
	\begin{proof}
		Let $I$ be a unit-cost instance. For any two exhaustive budget allocations $\pi$ and $\pi' \in \allocSetEx(I)$, by virtue of Lemma~\ref{lem:Zapp}, we have $\noiseMmaxappNORM{\pi} = \noiseMmaxappNORM{\pi'}$. 
		%Given that the ground truth is supposed to be exhaustive, 
		So, for any profile~$\profile$, we have:
		\begin{align*}
			\argmax_{\pi \in \allocSetEx(I)} L_{\noiseMmaxapp}(\profile, \pi, I) & = \argmax_{\pi \in \allocSetEx(I)} \prod_{A \in \profile} \frac{1}{\noiseMmaxappNORM{\pi}} 2^{|A \cap \pi|} \\
			& = \argmax_{\pi \in \allocSetEx(I)} 2^{\sum_{A \in \profile} |A \cap \pi|} \\
			& = \argmax_{\pi \in \allocSetEx(I)} \sum_{A \in \profile} |A \cap \pi| \\
			& = \appMaximizing(I, \profile).
		\end{align*}
		The last line follows from the fact that $\appMaximizing$ is exhaustive.
		
		$\appMaximizing$ thus coincides with the MLE on $I$ for the noise model $\noiseMmaxapp$. Moreover, since $\appMaximizing$ and $\costAppMaximizing$ coincide on unit-cost instances, the result also applies to $\costAppMaximizing$.
	\end{proof}
	
	\noindent But this result is only half satisfactory. Can we find an impossibility result similar to the one we had for $\appNashMaximizing$ and $\costAppNashMaximizing$? It is actually easy to see that the proof we gave for Theorem~\ref{thm:NashAppMax_NotMLE_UnitCost} also works for both $\appMaximizing$ and~$\costAppMaximizing$.
	
	\begin{theorem}
		\label{thm:appMax_NotMLE_UnitCost}
		There is no noise model $\noiseModel$ such that either $\appMaximizing$ or $\costAppMaximizing$ is the MLE of $\noiseModel$, not even on unit-cost instances.
	\end{theorem}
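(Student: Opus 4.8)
The plan is to reuse, essentially verbatim, the argument from the proof of Theorem~\ref{thm:NashAppMax_NotMLE_UnitCost}. The crucial observation is that that proof never uses the particular shape of the Nash scoring function: it only invokes the \emph{outputs} of the rule on four single-agent profiles of the two-project unit-cost instance $I$ with $\projSet = \{p_1, p_2\}$ and $b = 2$, together with the fact that a noise model is a probability distribution. So my first step is to verify that $\appMaximizing$ returns exactly the same sets as $\appNashMaximizing$ on these profiles. Evaluating $f_{\mathit{app}}$ directly gives $\appMaximizing(I, (\emptyset)) = \allocSet(I)$ (every allocation scores $0$), $\appMaximizing(I, (\{p_1\})) = \{\{p_1\}, \{p_1, p_2\}\}$, $\appMaximizing(I, (\{p_2\})) = \{\{p_2\}, \{p_1, p_2\}\}$, and $\appMaximizing(I, (\{p_1, p_2\})) = \{\{p_1, p_2\}\}$; these coincide with the outputs used in the earlier proof. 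Note that I cannot appeal to Lemma~\ref{lem:MLENecessaryCondition} here, since $\appMaximizing$ and $\costAppMaximizing$ are monotonic argmax rules and therefore \emph{do} satisfy weak reinforcement (Proposition~\ref{prop:ArgmaxRules_WeakReinforcement}); the impossibility must be derived directly from the definition of an MLE.

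With these outputs fixed, I would reproduce the bookkeeping. Writing $\prob_A^\pi$ for $\prob_\noiseModel(A \mid \pi, I)$, the two normalisation identities for the candidate ground truths $\{p_1\}$ and $\{p_1, p_2\}$ take the roles of \eqref{eq:proofNotMLEMaxAppLine1} and \eqref{eq:proofNotMLEMaxAppLine2}. The tie $\appMaximizing(I, (\emptyset)) = \allocSet(I)$ forces $\prob_{\emptyset}^{p_1} = \prob_{\emptyset}^{p_1, p_2}$, and the tie $\appMaximizing(I, (\{p_1\})) = \{\{p_1\}, \{p_1, p_2\}\}$ forces $\prob_{p_1}^{p_1} = \prob_{p_1}^{p_1, p_2}$. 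Subtracting one normalisation identity from the other and cancelling these two pairs of equal terms yields the linear relation $(\prob_{p_2}^{p_1} - \prob_{p_2}^{p_1, p_2}) + (\prob_{p_1, p_2}^{p_1} - \prob_{p_1, p_2}^{p_1, p_2}) = 0$, exactly as in \eqref{eq:proofNotMLEMaxAppLine3}. Finally, the strict winners on $(\{p_2\})$ and on $(\{p_1, p_2\})$ force $\prob_{p_2}^{p_1, p_2} > \prob_{p_2}^{p_1}$ and $\prob_{p_1, p_2}^{p_1, p_2} > \prob_{p_1, p_2}^{p_1}$, so both bracketed differences are strictly negative and their sum cannot be $0$, a contradiction. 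Hence no noise model makes $\appMaximizing$ an MLE on this instance.

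To close, I would transfer the impossibility from $\appMaximizing$ to $\costAppMaximizing$ as in the earlier proofs: on any unit-cost instance one has $c(A \cap \pi) = \ell\,|A \cap \pi|$, so $f_{\mathit{cost}}$ is a positive constant multiple of $f_{\mathit{app}}$ and the two rules coincide; the very same instance and contradiction therefore rule out any noise model for $\costAppMaximizing$ as well. I expect no real obstacle in this proof. The only thing that genuinely requires checking is the routine computation that the four outputs of $\appMaximizing$ above match those of $\appNashMaximizing$ used in Theorem~\ref{thm:NashAppMax_NotMLE_UnitCost}; once that is confirmed, the probabilistic argument transfers mechanically and the result follows.
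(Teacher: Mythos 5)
Your proposal is correct and follows exactly the paper's own route: the paper likewise reuses the two-project unit-cost instance from Theorem~\ref{thm:NashAppMax_NotMLE_UnitCost}, observes that $\appMaximizing$ produces the same outputs as $\appNashMaximizing$ on the four relevant single-agent profiles, and transfers the conclusion to $\costAppMaximizing$ via their coincidence on unit-cost instances. Your version merely spells out the probabilistic bookkeeping (and the correct observation that Lemma~\ref{lem:MLENecessaryCondition} is unavailable here) that the paper leaves implicit by reference to the earlier proof.
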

	
	\begin{proof}
		Consider the instance $I$ used in the proof of Theorem~\ref{thm:NashAppMax_NotMLE_UnitCost}. We claim that for all profiles that are relevant for the proof, $\appMaximizing$ and $\appNashMaximizing$ coincide. We list them below.
		\begin{gather*}
			\appMaximizing(I, (\{p_1\})) = \{\{p_1\}, \{p_1, p_2\}\} = \appNashMaximizing(I, (\{p_1\})). \\
			\appMaximizing(I, (\{p_2\})) = \{\{p_2\}, \{p_1, p_2\}\} = \appNashMaximizing(I, (\{p_2\})). \\
			\appMaximizing(I, (\{p_1, p_2\})) = \{\{p_1, p_2\}\} = \appNashMaximizing(I, (\{p_1, p_2\})). \\
			\appMaximizing(I, (\emptyset)) = \allocSet(I) = \appNashMaximizing(I, (\emptyset)).
		\end{gather*}
		Given that on unit-cost instances $\appMaximizing$ and $\costAppMaximizing$ coincide, this also applies to $\costAppMaximizing$.
	\end{proof}
	
	\noindent 
	It is interesting to note that the same result applies if one maximises the total relative satisfaction as introduced by \citet{LMR21}, i.e., the sum over all agents of the number (or cost) of approved and selected projects divided by the number (or cost) of approved projects. Indeed, one can easily check that for all relevant profiles this maximisation rule coincides with $\appNashMaximizing$. This trivially also applies when using the Nash social welfare with relative satisfaction.
	
	As an aside, observe that the greedy cost approval rule $\greedyApp$ discussed in Section~\ref{sec:truthtracking} coincides with $\appMaximizing$ on unit-cost instances. Thus, both Proposition~\ref{prop:appMax_MLE_Exhaustive} and Theorem~\ref{thm:appMax_NotMLE_UnitCost} apply to $\greedyApp$ as well, and we can refine Proposition~\ref{prop:greedy} as follows.
	
	\begin{corollary}
		\label{cor:GreedyRule}
		Under the assumption that the ground truth is exhaustive, the greedy cost approval rule $\greedyApp$ is the MLE for $\noiseMmaxapp$ for unit-cost instances. 
		
		Moreover, for unconstrained ground truths, there is no noise model $\noiseModel$ such that $\greedyApp$ is the MLE for $\noiseModel$, not even on unit-cost instances.
	\end{corollary}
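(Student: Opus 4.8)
The plan is to build everything on a single structural observation: on a unit-cost instance with budget $b = k\ell$, the rule $\greedyApp$ returns exactly the $k$-element sets of projects of maximal total approval score. Indeed, since every project costs $\ell$, any ranking $\rhd$ consistent with the approval scores makes the greedy procedure fill the budget with its $k$ top-ranked projects; and because $\sum_{A \in \profile} |A \cap \pi| = \sum_{p \in \pi} n_p^{\profile}$, the $k$-subsets maximising this quantity are precisely those top-$k$ sets. In other words, on unit-cost instances $\greedyApp$ agrees with $\appMaximizing$ \emph{on the exhaustive allocations}. First I would verify this coincidence carefully, since it is what lets me import the earlier results.

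For the first statement I would restrict attention to exhaustive ground truths. By Lemma~\ref{lem:Zapp} all exhaustive allocations of a unit-cost instance have the same cardinality, hence the same normalisation factor $\noiseMmaxappNORM{\truth}$, so maximising $L_{\noiseMmaxapp}(\profile, \cdot, I)$ over $\allocSetEx(I)$ reduces to maximising $\prod_{A \in \profile} 2^{|A \cap \pi|}$, equivalently $\sum_{A \in \profile} |A \cap \pi|$, over exhaustive $\pi$. This is exactly the top-$k$ selection performed by $\greedyApp$, so $\greedyApp$ is the MLE of $\noiseMmaxapp$ for unit-cost instances with exhaustive ground truth. This is just Proposition~\ref{prop:appMax_MLE_Exhaustive} read through the coincidence above.

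The second statement is where the real difficulty lies, and I expect it to be the main obstacle. The tempting move is to invoke Theorem~\ref{thm:appMax_NotMLE_UnitCost} directly via the coincidence, but that theorem's argument lives on profiles such as $(\{p_1\})$ where $\appMaximizing$ returns a \emph{non-exhaustive} allocation, whereas $\greedyApp$ is always exhaustive; so the two rules disagree on exactly the profiles that drive the impossibility, and the transfer is not automatic. I would therefore argue directly, using exhaustiveness as the lever. On the instance with $\projSet = \{p_1, p_2\}$, unit costs, and $b = 2$, the only exhaustive allocation is $\{p_1, p_2\}$, so $\greedyApp(I, \profile) = \{\{p_1, p_2\}\}$ for \emph{every} profile $\profile$. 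If $\greedyApp$ were the MLE of some $\noiseModel$ over unconstrained ground truths, then for every single-ballot profile $(A)$ the allocation $\{p_1, p_2\}$ would have to be the unique maximiser of $\prob_{\noiseModel}(A \mid \cdot, I)$; fixing any $\truth_0 \neq \{p_1, p_2\}$ and summing the strict inequalities $\prob_{\noiseModel}(A \mid \{p_1, p_2\}, I) > \prob_{\noiseModel}(A \mid \truth_0, I)$ over all $A \subseteq \projSet$ would give $1 > 1$, the desired contradiction. One could hope instead to route this through weak reinforcement and Lemma~\ref{lem:MLENecessaryCondition}, but $\greedyApp$ actually satisfies weak reinforcement on unit-cost instances (there it is a top-$k$ additive-scoring rule), so that necessary condition cannot deliver the result and the direct likelihood argument appears unavoidable.
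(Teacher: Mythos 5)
Your proof is correct, and for the second claim it takes a genuinely different---and in fact more careful---route than the paper. The paper handles the whole corollary in one line: it asserts that $\greedyApp$ and $\appMaximizing$ coincide on unit-cost instances, so that Proposition~\ref{prop:appMax_MLE_Exhaustive} and Theorem~\ref{thm:appMax_NotMLE_UnitCost} transfer immediately. For the first claim your argument is essentially the paper's: under exhaustive ground truths only $\allocSetEx(I)$ matters, and there $\greedyApp$ really is the restriction of $\appMaximizing$ (your top-$k$ observation), so equal cardinalities give equal normalisation factors via Lemma~\ref{lem:Zapp} and likelihood maximisation collapses to approval-score maximisation. For the second claim, you correctly identify that the paper's transfer is not automatic: $\greedyApp$ is exhaustive while $\appMaximizing$ is not on precisely the profiles driving the proof of Theorem~\ref{thm:appMax_NotMLE_UnitCost} (e.g.\ $\appMaximizing(I,(\{p_1\}))=\{\{p_1\},\{p_1,p_2\}\}$ whereas $\greedyApp(I,(\{p_1\}))=\{\{p_1,p_2\}\}$), so the equality constraints that the paper's subtraction argument extracts from tied outputs are simply not available for $\greedyApp$. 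Your replacement argument is sound: since $\greedyApp$ returns $\{\{p_1,p_2\}\}$ on \emph{every} profile over the two-project instance, an MLE interpretation over unconstrained ground truths would force $\prob_{\noiseModel}(A \mid \{p_1,p_2\}, I) > \prob_{\noiseModel}(A \mid \pi', I)$ for all four ballots $A$ and any fixed feasible $\pi' \neq \{p_1,p_2\}$, and summing these strict inequalities over $A$ yields $1 > 1$. This is, if anything, simpler than the argument the paper imports, and it isolates a general principle: any exhaustive rule fails to be an MLE over unconstrained ground truths on any instance admitting a unique exhaustive allocation but several feasible ones. Your side remark that Lemma~\ref{lem:MLENecessaryCondition} cannot deliver the result (because on unit-cost instances $\greedyApp$ is an additive argmax rule over $\allocSetEx(I)$ and hence satisfies weak reinforcement there) is also correct, and is consistent with the paper establishing the weak-reinforcement failure of $\greedyApp$ only on an instance with unequal costs.
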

	
	\noindent
	These insights apply also for any refinements of these rules, such as the \emph{leximax rule} introduced by \citet{REH20}.

	\subsubsection{Normalised Satisfaction}
	
	Let us conclude our formal analysis by briefly mentioning the utilitarian social welfare with the normalised satisfaction functions. For the same reasons as for $\relAppNashMaximizing$ and $\relCostAppNashMaximizing$, the corresponding utilitarian rules are not exhaustive. Analysing the epistemic status of these rules however turns out to be rather intricate, even on unit-cost instances. Indeed, it is less clear what a suitable noise model might look like, especially due to the complications related to the potential normalisation factor. Exploring these rules remains an interesting open problem.
	
	\section{Conclusion}
	\label{sec:conclusion}
	
	We have initiated the study of PB through the truth-tracking lens. For a total of eleven rules, we investigated whether they can be interpreted as MLEs. For those that cannot, we tried to identify specific conditions under which they still can serve as MLEs. All our results are summarised in Table~\ref{tab:results}.
	
	There is still some work to be done regarding the study of MLEs in the context of PB. The most obvious task would be to answer the open questions corresponding to the two missing cells in Table~\ref{tab:results}. In the case of our positive results, one could try to come up with more natural noise models for which the rules are MLEs, e.g., noise models for which not all the supersets of the ground truth have the same probability. Our work also shows a certain tension between efficiency requirements (exhaustiveness) and truth-tracking ability: the two rules that we proved to be MLEs (for the general case) both fail exhaustiveness. This interaction deserves further study. Similarly, we found that the most studied rules that enforce proportional representation fail to satisfy weak reinforcement. It would be interesting to investigate whether or not this constitutes a formal incompatibility.
	
	Then, on top of the MLE concept, the epistemic approach also offers several other ways of studying voting rules.
	As we have seen, finding rules that are MLEs is not an easy task. It might be the case that this is simply too demanding a requirement.
	Instead, other criteria that have been studied in the literature on epistemic social choice could be applied to the PB setting as well.
	For instance, it could be interesting to study PB rules with respect to their sample complexity \citep{CPS16} or their robustness against noise \citep{CKKK22}---a criterion that is somewhat similar to the MLE requirement but easier to satisfy. All of these constitute interesting directions for future work on a topic that is still very much under-studied and deserving of further attention.

	\begin{table*}
		\centering
		\resizebox{\linewidth}{!}{
			\begin{tabular}{lccccccccccc}
				\toprule
				& && \textbf{Greedy} & \multicolumn{4}{c}{\textbf{Approval-max}} & \multicolumn{4}{c}{\textbf{Cost-approval-max}} \\
				& \textbf{Sequential} & \textbf{MES$_{\mu}$} & \textbf{Cost} & \multicolumn{2}{c}{Standard} & \multicolumn{2}{c}{Normalised} &  \multicolumn{2}{c}{Standard} & \multicolumn{2}{c}{Normalised} \\
				& \textbf{Phragmén} & (for all $\mu$) & \textbf{Approval} & $\sum$ & $\prod$ & $\sum$ & $\prod$ & $\sum$ & $\prod$ & $\sum$ & $\prod$ \\
				\midrule
				Unit-cost$_{\mathit{EX}}$ & \xmark & -- & \cmark & \cmark & \cmark & -- & -- & \cmark & \cmark & -- & -- \\
				Unit-cost & \xmark & \xmark & \xmark & \xmark & \xmark & ? & \cmark & \xmark & \xmark & ? & \cmark \\
				%				Exactly ex.$_{\mathit{EX}}$ & ? & ? & ? & \cmark & ? & \cmark & ? & \cmark \\
				%				Exactly ex. & \xmark & \xmark & ? & \cmark & \xmark & \xmark & ? & \cmark \\
				%				General case$_{\mathit{EX}}$ & ? & ? & ? & \cmark & ? & ? & ? & \cmark \\
				General case & \xmark & \xmark & \xmark & \xmark & \xmark & ? & \cmark & \xmark & \xmark & ? & \cmark \\
				\bottomrule
			\end{tabular}
		}
		\caption{Summary of results: The sum $\Sigma$ and product~$\Pi$ symbols represent the utilitarian and the Nash variant of a welfare-based rule. A check-mark~\cmark{} indicates that there exists a noise model for which the rule is an MLE and a cross-mark~\xmark{} the fact that it is impossible to find such a noise model. The\ $_{\mathit{EX}}$ subscript signifies that we make the additional assumption that the ground truth is exhaustive. This assumption would not be meaningful for non-exhaustive rules; this is indicated by a bar~--. Remember that Sequential Phragmén is exhaustive on unit-cost instances. Question marks~? indicate open problems.}
		\label{tab:results}
	\end{table*}
	
	\paragraph{Acknowledgments.} We are grateful to multiple anonymous reviewers for the valuable feedback we received on this work.
	
	\bibliographystyle{ACM-Reference-Format}
	\bibliography{abb,PB}
	
\end{document}